\newtheorem{theorem}{\textbf{Theorem}}
\newtheorem{corollary}{\textbf{Corollary}}
\newcommand{\Rmnum}[1]{\expandafter\@slowromancap\romannumeral #1@}
\def\includegraphics{}
\newenvironment{bmcformat}{\baselineskip20pt\sloppy\setboolean{publ}{false}}{\baselineskip20pt\sloppy}
\begin{document}
\begin{bmcformat}

\title{An Effective Handover Analysis for the Randomly Distributed Heterogeneous Cellular Networks}

\author{Bin Fang,
         \email{Bin Fang - fcb096@mail.ustc.edu.cn}
         Wuyang Zhou\correspondingauthor
         \email{Wuyang Zhou\correspondingauthor - wyzhou@ustc.edu.cn}
      }

\address{%
    Wireless Information Network Lab, University of Science and Technology of China, Hefei, China, 230026
}%

\maketitle

\begin{abstract}
Handover rate is one of the most import metrics to instruct mobility management and resource management in wireless cellular networks. In the literature, the mathematical expression of handover rate has been derived for homogeneous cellular network by both regular hexagon coverage model and stochastic geometry model, but there has not been any reliable result for heterogeneous cellular networks (HCNs). Recently, stochastic geometry modeling has been shown to model well the real deployment of HCNs and has been extensively used to analyze HCNs. In this paper, we give an effective handover analysis for HCNs by stochastic geometry modeling, derive the mathematical expression of handover rate by employing an infinitesimal method for a generalized multi-tier scenario, discuss the result by deriving some meaningful corollaries, and validate the analysis by computer simulation  with multiple walking models. By our analysis, we find that in HCNs the handover rate is related to many factors like the base stations' densities and transmitting powers, user's velocity distribution, bias factor, pass loss factor and etc. Although our analysis focuses on the scenario of multi-tier HCNs, the analytical framework can be easily extended for more complex scenarios, and may shed some light for future study.
\end{abstract}
\section*{Keywords}
Stochastic geometry modeling, handover rates, heterogeneous cellular network.
\newpage
\ifthenelse{\boolean{publ}}{\begin{multicols}{2}}{}

\section{Introduction}
With the dramatically increasing of wireless traffic as well as the population of wireless terminals, the traditional homogeneous cellular network cannot provide sufficient bandwidth for all the wireless terminals. In response to the capacity challenges, smaller coverage base stations (BSs) are deployed in hotspots to offload and have a range of tens of meters to several hundred meters. This brings heterogeneity to traditional cellular network and gives birth to the heterogeneous cellular network (HCN). Heterogeneity is expected to be a key feature of the next generation of cellular networks, and an essential means for providing higher network capacity as well as expanded indoor and cell-edge coverage. In general, HCNs comprise a conventional cellular network overlaid with a diverse set of lower-power BSs such as micro cells, femtocells and perhaps relay BSs.

The BSs in different tiers of HCNs (the tiers of BSs are ordered by transmit power) may share the same spectra and have different coverage. Handover happens when a user leaves the coverage of its serving BS and handover rate is defined as the number of handovers per unit time. The handover in HCNs can be divided into two types: horizontal handover and vertical handover. Horizontal handover is the handover between two BSs in the same tier and vertical handover is the handover between two BSs in different tiers. Compared with horizontal handovers, vertical handovers are more difficult to implement because the HCNs may be deployed by different service providers. Thus, to implement vertical handovers, extra communication overhand between the the HCNs is essential. Moreover, vertical handovers would lead to additional transfer delay, jitter and high risk of dropping, which degrade the service quality. Therefore, handover rate especially vertical handover rate is one of the most metrics to instruct the deployment of mobility management and resource management.


The mathematical expression of handover rate in the homogeneous cellular network has been derived by the regular hexagon coverage model \cite{cite_last01} and stochastic geometry model \cite{cite_new03}\cite{cite_new04}. But for the heterogeneous cellular network, there has not been any reliable and generalized handover expression, due to the randomness of the BS positions of HCNs and the different transmitting power of different tiers. In the real deployment, the HCN BSs are distributed irregularly, sometimes in an anywhere plug-and-play manner, which results in a high level of spatial position randomness. BSs in different tiers have different transmitting power for communication, leading to different cell size for different tiers. As a consequence, it is difficult to characterize the cell boundaries and to track boundary crossings by UEs (i.e. handovers) in the global networks.
Few previous works have resolved the above challenges and give a reliable handover rate expression.

In the recent years, stochastic geometry modeling has shown its admirable ability of modeling the position distribution of HCN BSs \cite{cite_06}\cite{cite_07}, and has provided tractable accurate performance bounds for cellular wireless networks. Stochastic geometry is a very powerful mathematical and statistical tool for the modeling, analysis and design of HCNs with random topologies.  For instance, the modeling is employed for the capacity analysis of random channel access schemes like ALOHA \cite{cite_04} and carrier sensing multiple access (CSMA) \cite{cite_05}, the capacity analysis of single and multiple tier cellular networks \cite{cite_last02} and the capacity analysis of cognitive-based networks \cite{cite_last03}.

Hence, in this paper, we investigate the handover rates including horizontal and vertical handover rates by the stochastic geometry modeling. By employing an infinitesimal method, the mathematical expression of instantaneous handover rate is derived for a typical moving UE. From the derivation, we find that the instantaneous handover rate is related to the instantaneous moving speed, and is independent of the moving direction. That means only the moving speed distribution of the memoryless walking model contributes to the handover rates and the handover rates can be derived through averaging the instantaneous handover rates by the moving speed distribution. Thus, the derived handover rate expressions are applicable for all the memoryless walking models. The derived expressions are validated by computer simulation with multiple walking models and the impacts of system parameters like BS density, transmit power, moving velocity of UE, path loss factor are evaluated. Although our analysis focuses on the scenario of multi-tier HCNs, the analytical framework can be easily extended for more complex scenarios.

\section{Downlink System Model}

A fairly general model of HCNs considered in this paper contains $N$ tiers of BSs that are distinguished by their spatial densities, transmit powers, path loss exponents and biasing factors. For instance, as shown in Figure 1, high-power macrocell BS networks are overlaid with successively denser and lower power picocells and femtocells. Macrocell BSs and femtocell BSs can be well modeled by spatial random processes \cite{cite_06}\cite{cite_new02}. Under this model, the positions of BSs in the \emph{n}th-tier are modeled according to a homogeneous PPP (Poisson point process) $\Phi_n$ with intensity $\lambda_n$ in an Euclidean plane.

Each BS in the \emph{n}th-tier has the same transmit power $P_n$, and has the same path loss exponents $\alpha_n>2$, ${n=1,...,N}$. Assume that UEs are uniformly distributed in the Euclidean plane with density of $f_u$, and the movements of UEs are memoryless and are independent of the distributions of BSs. Memoryless here means the current position of a UE is only related to its latest position and is independent of its more earlier position, i.e. $\mathbb{P}[S(t_0)|S(t_1),S(t_2),\cdots]=\mathbb{P}[S(t_0)|S(t_1)]$, where $\mathbb{P}[x]$ is the probability of x, $S(t_i)$ is the position of a UE at time $t_i$, and $t_i > t_{i+1}$, i = 0,1,2,$\cdots$.

We assume open access which means a user is allowed to access any tier's BSs. And consider a cell association based on maximum biased-received-power (BRP) (termed biased association), where a mobile UE is associated with the strongest BS in terms of long-term averaged BRP at the UE. The BRP from the \emph{j}th BS in the \emph{n}th-tier is  $P_{r,nj}$ that can be given by
\begin{equation}
\label{equ_new01th}
P_{r,nj} = P_nL_0(R_{nj}/r_0)^{-\alpha_n}B_n
\end{equation}
where $R_{nj}$ is the distance of the \emph{j}th BSs in the \emph{n}th-tier from the origin, $L_0$ is the path loss at the reference distance $r_0$ (typically about $(4\pi/\nu)^{-2}$ for $r_0=1$, where $\nu$ denotes the wavelength). And $B_n$ is the bias factor of admission \cite{cite_new05}, that could extends the cell range (or coverage) of the \emph{n}th-tier by employing $B_n>1$. The considered BRP is a long-term averaged value and fading is averaged out, and so does not include fading.
We assume that handover happens only when the UE is going across the boundary of the current BS's coverage, which is determined by the long-term averaged BRP and is shown as Figure 1.

\section{Problem Formulation}

Consider a typical UE that is at the origin and is admitted to the \emph{k}th BS of the \emph{m}th-tier initially. As the typical UE moves, it may immigrate to other BSs in the same tier or other tiers. So its admission state can be depicted as Figure 2, that at time \emph{t}, the typical UE is admitted to the \emph{k}th BS with probability $P_{a,k}(t)$, or is admitted to other BS with probability $P_{a,k}(0)-P_{a,k}(t)$. Thus, the instantaneous transition rate from k state to the $\overline{k}$ state at time \emph{t} is
\begin{equation}
\label{equ_01th}
H_k^m(t) = -\frac{dP_{a,k}(t)}{dt}
\end{equation}

Since the movement of the typical UE is a memoryless process and is independent of BSs' distributions, then the instantaneous transition rate is stable and can be given by
\begin{equation}
\label{equ_02th}
H_k^m \triangleq  -\lim\limits_{t\rightarrow 0}\frac{dP_{a,k}(t)}{dt}
\end{equation}

$H_k^m$ is the instantaneous handover rate of the typical UE indeed. Then, the handover rate in a region with area $S$ can be given by
\begin{equation}
\label{equ_02.1th}
\lambda_h = \mathbb{E}[\sum\limits_{m=1}^NH_k^mf_uS]=\sum\limits_{m=1}^N(\mathbb{E}[H_k^m])f_uS
\end{equation}
where $\mathbb{E}[x]$ is the expectation of variable $x$, and $\mathbb{E}[H_k^m]$ is the average handover rate of the typical UE. As the handover is assumed to happen at the boundary of BS coverage, not at the boundary of a specified region with area S, the average handover rate is independent of the shape of the specified region.
In the following section, we would derive the arithmetic expression of $\mathbb{E}[H_k^m]$.

\section{Derivation of Handover Rate}

In this section, we would derive $\mathbb{E}[H_k^m]$ in two steps: firstly, we use an infinitesimal method to derive the instantaneous handover rate at time 0 of the typical UE with instantaneous velocity $v$, then average the instantaneous handover rate by the distribution of the velocity $v$. Thus, the impacts of walking model on the handover rate can be reflected by the distribution of the velocity $v$, i.e. the walking model decides the distribution of velocity $v$, then further affects the average handover rate. So the analysis is applicable for all the memoryless walking models with that thought. Note that the moving direction of the typical UE would not affect the instantaneous handover rate and we would give the explanation in the following derivation.

In the following, we would give the position distribution of the associated BS firstly, and then derive the handover probability of $(P_{a,k}(0)-P_{a,k}(t))$, the  instantaneous handover rate ($H_k^m$) and the average handover rate ($\mathbb{E}[H_k^m]$) in turn.

\subsection{Position Distribution of Associated BS}

Denote $(R_{nj},\theta_{nj})$ as the polar coordinate of the \emph{j}th BS in the \emph{n}th-tier. Assume that the typical UE is admitted to the \emph{k}th BS of the \emph{m}th-tier initially, thus $P_{r,mk} > P_{r,nj}$ for all $n \in \{1,\cdots,N\}$. According to the max-BRP based association and the BRP definition in equation (\ref{equ_new01th}), the distance boundary condition of these unassociated BSs can be derived as equation (\ref{equ_04th}) based on $P_{r,mk} > P_{r,nj}$,
\begin{equation}
\label{equ_04th}
R_{nj} > (\frac{P_nB_n}{P_mB_m})^{\frac{1}{\alpha_n}}(R_{mk})^{\frac{\alpha_m}{\alpha_n}}\triangleq R_{n}^{lb}
\end{equation}
where $R_{n}^{lb}$ is defined as the distance lower bound of the \emph{n}th-tier BSs for the clarity of expression.

According to the distance lower bound of each tier, the probability density function (PDF) of the associated BS's distance ($R_{mk}$) can be derived by using the null probability of a 2-D Poisson process with density $\lambda$ in an area A, which is $\exp(-\lambda A)$. By setting $\lambda=\lambda_n$ and $A=\pi (R_n^{lb})^2$ for the \emph{n}th tier, (n=1,$\cdots$,N)\cite{cite_new05}, we could give the probability density function (PDF) of $R_{mk}$ as
\begin{equation}
\label{equ_new02th}
f(R_{mk})=2\pi\lambda_mR_{mk}\exp\{-\pi\sum\limits_{n=1}^{N}\lambda_n(R_{n}^{lb})^2\}
\end{equation}
where $R_m^{lb}=R_{mk}$. Note that $\int\limits_{0}^{\infty}f(R_{mk})dR_{mk}\triangleq \gamma_m$ is the probability that the typical UE is admitted to a \emph{m}th-tier BS.

Since BSs are deployed as a PPP (Poisson point process), the $\theta_{mk}$ is uniformly distributed in the range of $[0,2\pi]$, and its PDF is given by
\begin{equation}
\label{equ_new03th}
f(\theta_{mk}) = \frac{1}{2\pi}
\end{equation}

Since the distributions of $\theta_{mk}$ and $R_{mk}$ are independent of the coordinate axis and the moving direction of UE, we can assume that the X axis is along the moving direction of the typical UE at time 0.

\subsection{Derivation of Handover Probability}

When the typical UE moves an infinitesimal distance of $r$, i.e. the typical UE moves to the point $(r,0)$ with $r\rightarrow 0$, the BRP from the \emph{j}th BS in the \emph{n}th-tier is
\begin{equation}
\label{equ_05th}
P_{r,nj}^{new} \!\!=\frac{P_nB_nL_0}{(\sqrt{(R_{nj}\cos(\theta_{nj})\!-\!r)^2\!+\!(R_{nj}\sin(\theta_{nj}))^2})^{\alpha_n}}
\end{equation}

According to the max-BRP based association, after the typical UE moves to the new point, it is still admitted to the primary BS \emph{k} only when the BRP from BS \emph{k} is larger than the BRP from anyone else. Given the position of BS \emph{k}, the probability that the typical UE keeps the primary link to the BS k is denoted by $P_{a,k}(r|R_{mk},\theta_{mk})$ and is given by equation (\ref{equ_06th}).
\begin{equation}
\label{equ_06th}
\begin{array}{l}
P_{a,k}(r|R_{mk},\theta_{mk}) = \prod\limits_{n=1}^N\mathbb{P}[P_{r,mk}^{new} \geq P_{r,nj}^{new}]
= \prod\limits_{n=1}^N\mathbb{P}[\cos(\theta_{nj})\leq \frac{R_{nj}^2+x_{nj}}{2rR_j}]
\end{array}
\end{equation}
where $\mathbb{P}[x]$ denotes the probability of event x, and $x_{nj}$ is defined as

\begin{equation}
\label{equ_new04th}
x_{nj}\!=\!r^2\!-\!(\frac{B_nP_n}{B_mP_m})^{\frac{2}{\alpha_n}}(R_{mk}^2\!-\!2rR_{mk}\cos(\theta_{mk})\!+\!r^2)^{\frac{\alpha_m}{\alpha_n}}
\end{equation}
And $\cos(\theta_{nj})\leq \frac{R_{nj}^2+x_{nj}}{2rR_j}$ is derived according to $P_{r,mk}^{new} \geq P_{r,nj}^{new}$ in equation (\ref{equ_06th}).

Hence, according to equation (\ref{equ_06th}), the typical UE would keep its primary link if all the BSs in the \emph{n}th-tier (n=1,$\cdots$,N) are in the region of $\cos(\theta_{nj})\leq \frac{R_{nj}^2+x_{nj}}{2rR_j}$, or wound not other wise.

We call the region of $\cos(\theta_{nj}) > \frac{R_{nj}^2+x_{nj}}{2rR_j}$ as the bad region of \emph{n}th-tier BSs, which is shown as Figure 3. It means that if there are \emph{n}th-tier BSs in the \emph{n}th-tier bad region, the typical UE would immigrate from the serving BS to one of those BSs. Thus, $P_{a,k}(r|R_k,\theta_k)$ equals to the probability that no BS is in its bad region for all tiers.


Denote the area of the \emph{n}th-tier bad region as $A_{mn}(r,R_{mk},\theta_{mk})$, then we can give the null probability of the the PPP $\Phi_n$ in the \emph{n}th-tier bad region as $\exp(\lambda_n A_{mn}(r,R_{mk},\theta_{mk}))$. Since all the PPPs $\{\Phi_n\}_{n=1,\cdots,N}$ are independent, $P_{a,k}(r|R_k,\theta_k)$ is the product of those null probabilities and can be given by the following equation
\begin{equation}
\label{equ_07th}
P_{a,k}(r|R_k,\theta_k)=\prod\limits_{n=1}^N\exp(-\lambda_n A_{mn}(r,R_{mk},\theta_{mk}))
\end{equation}

According to $\cos(\cdot)\leq 1$ and the definition of bad region, the $\theta_{nj}$ boundary conditions of the \emph{n}th-tier bad region can be given as
\begin{equation}
\label{equ_08th}
1 \geq \cos(\theta_{nj})> \frac{R_{nj}^2+x_{nj}}{2rR_j},
\end{equation}

Thus, based on the $R_{nj}$ boundary condition in equation (\ref{equ_04th}) and $\theta_{nj}$ boundary conditions in equation (\ref{equ_08th}), we can further derive the boundary conditions of the \emph{n}th-tier bad region in the Appendix \ref{appendix_01}, and give the results as equation (\ref{equ_09th}), where $\vartheta_{nj}$ is defined as $\vartheta_{nj} = \theta_{nj}^{max}-\theta_{nj}^{min}$, $\theta_{nj}^{max}$ and $\theta_{nj}^{min}$ are the upper bound and lower bound of $\theta_{nj}$, respectively. The shapes of bad regions with different boundary conditions in equation (\ref{equ_09th}) can be depicted as Figure 3.

As Figure 3 (a) shows, when $R_n^{lb}<R_{mk}$ holds, according to the derivation in Appendix \ref{appendix_01}, the range of $R_{nj}$ is $[R_n^{lb},r+\sqrt{r^2-x_{nj}}]$ and the range of $\theta_{nj}$ is $[-\arccos(\frac{R_{nj}^2+x_{nj}}{2rR_{nj}}),\arccos(\frac{R_{nj}^2+x_{nj}}{2rR_{nj}})]$ for a certain $R_{nj}$.

As Figure 3 (b) shows, when both $R_n^{lb}> R_{mk}$ and $|\theta_{mk}|<\arccos(\frac{-R_{mk}}{R_n^{lb}})$ hold, the ranges of $R_{nj}$ and $\theta_{nj}$ are the same as the ranges of $R_{nj}$ and $\theta_{nj}$ in Figure 3 (a). When both $R_n^{lb}> R_{mk}$ and $|\theta_{mk}|>\arccos(\frac{-R_{mk}}{R_n^{lb}})$ hold, as shown by Figure 3 (c), according to the analysis in Appendix \ref{appendix_01}, the range of $R_{nj}$ is $[R_n^{lb},r+\sqrt{r^2-x_{nj}}]$, and the range of $\theta_{nj}$ is $[-\pi,\pi]$ if $R_{nj}\in [R_n^{lb},-r+\sqrt{r^2-x_{nj}}]$, or is $[-\arccos(\frac{R_{nj}^2+x_{nj}}{2rR_{nj}}),\arccos(\frac{R_{nj}^2+x_{nj}}{2rR_{nj}})]$ if $R_{nj}\in [-r+\sqrt{r^2-x_{nj}},r+\sqrt{r^2-x_{nj}}]$.

\begin{equation}
\scriptsize
\label{equ_09th}
\left\{\begin{array}{lc}
\!\!\!R_{nj}\in \emptyset, \vartheta_{nj}=0,&\!\!\!\!\!\!\!\!\!\!\!\!\!\!\!\cos(\theta_{mk})>\frac{R_{mk}}{R_n^{lb}}\\
\!\!\!R_{nj}\in [R_n^{lb},r+\sqrt{r^2-x_{nj}}],\vartheta_{nj}=\left\{\begin{array}{lc}
\!\!\!2\pi,&\!\!\!R_{nj}\in [R_n^{lb},-r+\sqrt{r^2-x_{nj}})\\
\!\!\!2\arccos\frac{R_{nj}^2+x_{nj}}{2rR_{nj}},&\!\!\!R_{nj}\in [-r + \sqrt{r^2-x_{nj}},r+\sqrt{r^2-x_{nj}}]
\end{array}
\right.
& \!\!\!\!\!\!\!\!\!\!\!\!\!\!\!\cos(\theta_{mk})<-\frac{R_{mk}}{R_n^{lb}}\\
\!\!\!R_{nj}\in (R_n^{lb},r+\sqrt{r^2-x_{nj}}), \vartheta_{nj}=2\arccos(\frac{R_{nj}^2+x_{nj}}{2rR_{nj}}), & \!\!\!\!\!\!\!\!\!\!\!\!\!\!\! -\frac{R_{mk}}{R_n^{lb}}\leq\cos(\theta_{mk})\leq\frac{R_{mk}}{R_n^{lb}}
\end{array}
\right.
\end{equation}

Based on the boundary conditions, the area of the \emph{n}th-tier bad region ($A_{mn}(r,R_{mk},\theta_{mk})$) can be derived as the equation (\ref{equ_10th}), where $I(c)$ is the index function that equals 1 if the condition c holds or 0 otherwise. The first term in equation (\ref{equ_10th}) corresponds to the case of $\frac{-R_{mk}}{R_n^{lb}}<\cos(\theta_{mk})<\frac{R_{mk}}{R_n^{lb}}$, which is depicted by Figure 3 (a) and (b), and the second and third terms correspond to the case of $\cos(\theta_{mk})<\frac{-R_{mk}}{R_n^{lb}}$ that is depicted by Figure 3 (c).

\begin{equation}
\label{equ_10th}
\begin{array}{l}
A_{mn}(r,R_{mk},\theta_{mk})=[\int\limits_{R_n^{lb}}^{r+\sqrt{r^2-x_{nj}}}2\arccos(\frac{R_{nj}^2+x_{nj}}{2rR_{nj}})R_{nj}dR_{nj}]I(-\frac{R_{mk}}{R_n^{lb}}\leq\cos(\theta_{mk})\leq\frac{R_{mk}}{R_n^{lb}})\\
+[\int\limits_{-r+\sqrt{r^2-x_{nj}}}^{r+\sqrt{r^2-x_{nj}}}2\arccos(\frac{R_{nj}^2+x_{nj}}{2rR_{nj}})R_{nj}dR_{nj} + \int\limits_{R_n^{lb}}^{-r+\sqrt{r^2-x_{nj}}}2\pi R_{nj}dR_{nj} ]I(\cos(\theta_{mk})<-\frac{R_{mk}}{R_n^{lb}})
\end{array}
\end{equation}

Thus, the handover probability can be obtained as $(P_{a,k}(0)-\mathbb{E}_{\{R_{mk},\theta_{mk}\}}P_{a,k}(r|R_{mk},\theta_{mk}))$.

\subsection{Handover Rate Derivation}

Since the instantaneous handover rate $H_k^m$ is the derivative of handover probability, according to equations (\ref{equ_02th}) and (\ref{equ_07th}), the instantaneous handover rate $H_k^m$ can be derived as equation (\ref{equ_11th}).
\begin{equation}
\label{equ_11th}
\begin{array}{l}
H_k^m = -\lim\limits_{t\rightarrow 0}\frac{dP_{a,k}(t)}{dt}=-\lim\limits_{t\rightarrow 0}\frac{d\mathbb{E}_{\{R_{mk},\theta_{mk}\}}[P_{a,k}(r|R_{mk},\theta_{mk})]}{dt}\\
\!\!=\! -\!\mathbb{E}_{\{R_{mk},\theta_{mk}\}}[\lim\limits_{t\rightarrow 0}\frac{dP_{a,k}(r|R_{mk},\theta_{mk})}{dt}]\\
\!\!=\!\!-\mathbb{E}_{\{R_{mk},\theta_{mk}\}}[\lim\limits_{r\rightarrow 0}\!(\!\frac{dP_{a,k}(r|R_{mk},\theta_{mk})}{dr}\frac{dr}{dt})]\\
\!\!\mathop=\limits^{(a)}\!\!-\mathbb{E}_{\{R_{mk},\theta_{mk}\}}[\lim\limits_{r\rightarrow 0}\frac{d}{dr}(-\sum\limits_{n=1}^N\lambda_nA_{mn}(r,R_{mk},\theta_{mk})
\cdot \exp(-\sum\limits_{n=1}^N\lambda_nA_{mn}(r,R_{mk},\theta_{mk})))]\cdot v \\
\!\!\mathop =\limits^{(b)}\!\mathbb{E}_{\{R_{mk},\theta_{mk}\}}[\sum\limits_{n=1}^N(\lambda_n\cdot \lim\limits_{r\rightarrow 0}\frac{dA_{mn}(r,R_{mk},\theta_{mk})}{dr})] \cdot v
\end{array}
\end{equation}
where $v=\lim\limits_{r\rightarrow 0}\frac{dr}{dt}$ is the instantaneous velocity of the typical UE at time $t=0$, (a) is obtained according to the expression of $P_{a,k}(r|R_{mk},\theta_{mk})$ in equation (\ref{equ_07th}), and (b) is obtained due to $\lim\limits_{r\rightarrow 0}A_{mn}(r,R_{mk},\theta_{mk})=0$ and $\lim\limits_{r\rightarrow 0}\exp(-\sum\limits_{n=1}^N\lambda_nA_{mn}(r,R_{mk},\theta_{mk}))=1$.

Denote $H_k^{m-n}$ as the instantaneous handover rate from the \emph{k}th BS in the \emph{m}th-tier to the BSs in the \emph{n}th-tier.
Thus
\begin{equation}
\label{equ_new06th}
H_k^m = \sum\limits_{n=1}^N H_k^{m-n}
\end{equation}

According to the derivation of $H_k^m$ and the expression of $H_k^m$ in equation (\ref{equ_11th}), we can give $H_k^{m-n}$ as
\begin{equation}
\label{equ_new05th}
H_k^{m-n} = \mathbb{E}_{\{R_{mk},\theta_{mk}\}}[\lambda_n\cdot \lim\limits_{r\rightarrow 0}\frac{dA_{mn}(r,R_{mk},\theta_{mk})}{dr}] \cdot v
\end{equation}

Based on the expression of $A_{mn}(r,R_{mk},\theta_{mk})$ in equation (\ref{equ_10th}), $\lim\limits_{r\rightarrow 0}\frac{dA_{mn}(r,R_{mk},\theta_{mk})}{dr}$ is derived in Appendix \ref{appendix_02}, and the result is given as equation (\ref{equ_14th}).

\begin{equation}
\label{equ_14th}
\begin{array}{l}
\lim\limits_{r\rightarrow 0}\frac{dA_{mn}(r,R_{mk},\theta_{mk})}{dr} = I(\cos(\theta_{mk})<-\frac{R_{mk}}{R_n^{lb}})[-2\pi \frac{(R_n^{lb})^2}{R_{mk}}\cos(\theta_{mk})]\\
+ I(-\frac{R_{mk}}{R_n^{lb}}<\cos(\theta_{mk})<\frac{R_{mk}}{R_n^{lb}})[-2\frac{(R_n^{lb})^2}{R_{mk}}\cos(\theta_{mk})\arccos(\frac{R_n^{lb}}{R_{mk}}\cos(\theta_{mk}))
+ 2\sqrt{(R_n^{lb})^2-(\frac{(R_n^{lb})^2}{R_{mk}}\cos(\theta_{mk}))^2}]
\end{array}
\end{equation}

\begin{theorem}
\label{theorem_01}
The instantaneous handover rate from \emph{m}th-tier BSs to \emph{n}th-tier BSs for a UE with instantaneous velocity $v$ is $H_k^{m-n}$ and is given by equation (\ref{equ_15th}).

\begin{equation}
\label{equ_15th}
H_k^{m-n}=8\lambda_n\lambda_mv\int\limits_0^{+\infty}\int\limits_0^{\min(1,\frac{R_{mk}}{R_n^{lb}})}
[\sqrt{\frac{1-z^2}{(\frac{R_{mk}}{R_n^{lb}})^2-z^2}}+\sqrt{\frac{(\frac{R_{mk}}{R_n^{lb}})^2-z^2}{1-z^2}}]dz(R_n^{lb})^2
\exp\{-\pi\sum\limits_{i=1}^N(\lambda_i(R_i^{lb})^2)\}dR_{mk}
\end{equation}
\end{theorem}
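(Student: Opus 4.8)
The plan is to evaluate $H_k^{m-n}$ directly from the formulas already in hand, namely (\ref{equ_new05th}) for $H_k^{m-n}$ and (\ref{equ_14th}) for $\lim_{r\to0}\frac{dA_{mn}}{dr}$. First I would substitute (\ref{equ_14th}) into (\ref{equ_new05th}) and write the expectation over $(R_{mk},\theta_{mk})$ as an iterated integral against the joint density $f(R_{mk})f(\theta_{mk})=\lambda_m R_{mk}\exp\{-\pi\sum_{i=1}^{N}\lambda_i(R_i^{lb})^2\}$ read off from (\ref{equ_new02th}) and (\ref{equ_new03th}):
\[
H_k^{m-n}=\lambda_n\lambda_m v\int_0^{\infty}R_{mk}\exp\Big\{-\pi\sum_{i=1}^N\lambda_i(R_i^{lb})^2\Big\}\Big(\int_0^{2\pi}\lim_{r\to0}\tfrac{dA_{mn}}{dr}\,d\theta_{mk}\Big)dR_{mk}.
\]
Since the integrand of (\ref{equ_14th}) depends on $\theta_{mk}$ only through $\cos\theta_{mk}$ (both its explicit occurrences and its indicators), the inner integral reduces, via $c=\cos\theta_{mk}$ together with $\int_0^{2\pi}g(\cos\theta)\,d\theta=2\int_{-1}^{1}g(c)(1-c^2)^{-1/2}\,dc$, to a one-dimensional integral over $c\in[-1,1]$.

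Writing $\rho:=R_{mk}/R_n^{lb}$, the indicators in (\ref{equ_14th}) split the $c$-range: if $\rho\ge1$ only the second term is active, over all of $(-1,1)$; if $\rho<1$ the second term acts on $(-\rho,\rho)$, the first on $(-1,-\rho)$, and nothing contributes on $(\rho,1)$. The crucial manoeuvre is to rewrite the $\arccos$ factor with $\arccos x=\tfrac\pi2-\arcsin x$, turning $-\tfrac{(R_n^{lb})^2}{R_{mk}}\cos\theta_{mk}\arccos(\tfrac{R_n^{lb}}{R_{mk}}\cos\theta_{mk})$ into an \emph{even} function of $c$, namely $\tfrac{(R_n^{lb})^2}{R_{mk}}\cos\theta_{mk}\arcsin(\tfrac{R_n^{lb}}{R_{mk}}\cos\theta_{mk})$, plus an \emph{odd} one, $-\tfrac\pi2\tfrac{(R_n^{lb})^2}{R_{mk}}\cos\theta_{mk}$. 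The odd pieces are integrated against $(1-c^2)^{-1/2}$ elementarily (antiderivative $-\sqrt{1-c^2}$): the $\tfrac\pi2 c$ piece integrates to $0$ over the symmetric interval, while (when $\rho<1$) the first-indicator term over $(-1,-\rho)$ contributes $2\pi(R_n^{lb})^2\sqrt{1-\rho^2}/R_{mk}$, which I expect to cancel against a term produced below. After discarding the vanishing odd part, the surviving even part is integrated as $2\int_0^{\min(1,\rho)}$, the upper limit being exactly $(-\rho,\rho)\cap(-1,1)$.

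It then remains to integrate by parts the $\arcsin$ term over $[0,\min(1,\rho)]$ with $u=\arcsin(c/\rho)$ and $dv=\tfrac{c}{\rho}(1-c^2)^{-1/2}dc$ (so $v=-\rho^{-1}\sqrt{1-c^2}$ and $du=(\rho^2-c^2)^{-1/2}dc$); this converts it into $\tfrac1\rho\int_0^{\min(1,\rho)}\sqrt{\tfrac{1-z^2}{\rho^2-z^2}}\,dz$ plus a boundary term, which vanishes for $\rho\ge1$ (because $\sqrt{1-c^2}=0$ at $c=1$) and, for $\rho<1$, equals exactly minus the first-indicator contribution recorded above — so the two cancel and the two cases merge into one expression. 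Combining this with the remaining $\sqrt{(R_n^{lb})^2-((R_n^{lb})^2\cos\theta_{mk}/R_{mk})^2}=\rho^{-1}R_n^{lb}\sqrt{\rho^2-c^2}$ yields $\tfrac1\rho\int_0^{\min(1,\rho)}\big[\sqrt{\tfrac{1-z^2}{\rho^2-z^2}}+\sqrt{\tfrac{\rho^2-z^2}{1-z^2}}\big]dz$; tracking the numerical factors — the $2$ inside (\ref{equ_14th}), the $2$ from the $\theta_{mk}\!\to\!c$ reduction, the $2$ from evenness, and the remaining $R_n^{lb}$ from (\ref{equ_14th}) combining with the $\tfrac1\rho$ into $(R_n^{lb})^2/R_{mk}$ — the inner $\theta_{mk}$-integral equals $\tfrac{8(R_n^{lb})^2}{R_{mk}}\int_0^{\min(1,\rho)}[\cdots]\,dz$. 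Substituting this back, the $R_{mk}$ in the density cancels the $R_{mk}^{-1}$ and (\ref{equ_15th}) follows.

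I expect the difficulty to be organisational rather than conceptual: getting the $c$-range split at $\pm\rho$ right, recognising the $\arccos\!\to\!\arcsin$ rewrite as the device that makes the $\pi$-terms cancel cleanly across the two indicator regions, and checking that the only non-vanishing by-parts boundary term is exactly the one cancelled by the first-indicator piece when $\rho<1$. The rest is routine substitution and careful accounting of constants.
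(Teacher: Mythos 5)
Your proposal is correct and follows essentially the same route as the paper's Appendix C: substitute the joint density of $(R_{mk},\theta_{mk})$, reduce the angular integral via $z=\cos\theta_{mk}$ with the indicator-induced split at $\pm\min(1,R_{mk}/R_n^{lb})$, integrate the $\arccos$ term by parts, and observe that the resulting boundary term cancels the contribution of the $I(\cos\theta_{mk}<-R_{mk}/R_n^{lb})$ piece, after which the constants assemble into equation (\ref{equ_15th}). Your $\arccos x=\tfrac{\pi}{2}-\arcsin x$ even/odd decomposition is only a cosmetic repackaging of the paper's direct integration by parts on $z\arccos(\tfrac{R_n^{lb}}{R_{mk}}z)$, whose boundary evaluation at $\pm\min(1,R_{mk}/R_n^{lb})$ produces the same $-\pi\sqrt{1-\min(1,R_{mk}/R_n^{lb})^2}$ term.
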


\begin{proof}
Based on equations (\ref{equ_new05th}) and (\ref{equ_14th}), $H_k^{m-n}$ is further derived in Appendix \ref{appendix_03}.
\end{proof}

Hence, the average handover arrival rate $\lambda_h$ in equation (\ref{equ_02.1th}) can be given by equation (\ref{equ_16th}), where $f_v(v)$ is the probability density function of the velocity $v$ in the specified region with area $S$ and is determined by the walking model of UEs.

\begin{equation}
\label{equ_16th}
\begin{array}{l}
\lambda_h = \sum\limits_{m=1}^N(\mathbb{E}[H_k^mf_uS])=\sum\limits_{m=1}^N\sum\limits_{n=1}^N\mathbb{E}[H_k^{m-n}f_uS]\\
=\!f_u\!S\!\int\limits_{0}^{+\infty}\!vf_v(v)dv\!\sum\limits_{m=1}^N\!\sum\limits_{n=1}^N\!\{\!8\lambda_n\lambda_m\!\!\int\limits_0^{+\infty}\!\int\limits_0^{\min(1,\frac{R_{mk}}{R_n^{lb}})}
\!\![\sqrt{\frac{1\!-\!z^2}{(\frac{R_{mk}}{R_n^{lb}})^2\!-\!z^2}}\!\!+\!\!\sqrt{\frac{(\frac{R_{mk}}{R_n^{lb}})^2\!-\!z^2}{1\!-\!z^2}}]\!dz(R_n^{lb})^2\!
\exp\{\!-\!\pi\sum\limits_{i=1}^N(\lambda_i(R_i^{lb})^2)\}dR_{mk}\}
\end{array}
\end{equation}

Similarly, the average handover arrival rate from \emph{m}th-tier BSs to \emph{n}th-tier BSs  (denoted as $\lambda_h^{m-n}$) can be derived as equation (\ref{equ_17th}).

\begin{equation}
\label{equ_17th}
\begin{array}{l}
\lambda_h^{m-n} = \mathbb{E}[H_k^{m-n}f_uS]\\
=\!f_u\!S\!\int\limits_{0}^{+\infty}\!vf_v(v)dv8\lambda_n\lambda_m\!\!\int\limits_0^{+\infty}\!\int\limits_0^{\min(1,\frac{R_{mk}}{R_n^{lb}})}
\!\![\sqrt{\frac{1\!-\!z^2}{(\frac{R_{mk}}{R_n^{lb}})^2\!-\!z^2}}\!\!+\!\!\sqrt{\frac{(\frac{R_{mk}}{R_n^{lb}})^2\!-\!z^2}{1\!-\!z^2}}]\!dz(R_n^{lb})^2\!
\exp\{\!-\!\pi\sum\limits_{i=1}^N(\lambda_i(R_i^{lb})^2)\}dR_{mk}
\end{array}
\end{equation}

Although the derived handover rate expressions are not closed-form, these expressions can be efficiently computed numerically as opposed to the usual Monte Carlo methods that rely on repeated random sampling to compute these results.

\subsection{Discussions of Handover Rates}

In equations of (\ref{equ_16th}) and (\ref{equ_17th}), the general expressions of handover rates have been derived. In this section, some corollaries and special cases of handover rates in the stochastic modeling of the HCNs would be given.

\begin{corollary}
\label{corollary_01}
$\lambda_h^{m-n}=\lambda_h^{n-m}$ holds for any $m,n\in\{1,\cdots,N\}$, that is, the forward and reverse handover rates between any two tiers are the same.
\end{corollary}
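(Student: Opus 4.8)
The plan is to read the identity off the closed integral form (\ref{equ_17th}). The prefactor $f_uS\int_0^{\infty}vf_v(v)\,dv$ and the coefficient $8\lambda_n\lambda_m$ are already invariant under the swap $m\leftrightarrow n$, so it suffices to show that the single integral over $R_{mk}$ that remains is unchanged when $m$ and $n$ are interchanged. I would do this by a change of variables that reinterprets the serving distance ``as if'' the UE were anchored in tier $n$, combined with a reflection identity for the inner $z$-integral.

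First I would record the reflection identity. Writing $\mathcal{J}(b)\triangleq\int_0^{\min(1,b)}\big[\sqrt{(1-z^2)/(b^2-z^2)}+\sqrt{(b^2-z^2)/(1-z^2)}\big]\,dz$ for the bracketed integral in (\ref{equ_17th}), the substitution $z=bw$ gives, after simplification, $\mathcal{J}(b)=b\,\mathcal{J}(1/b)$ for all $b>0$. Since $R_m^{lb}=R_{mk}$ by (\ref{equ_04th}) with $n=m$, applying this with $b=R_{mk}/R_n^{lb}$ rewrites the integrand of $\lambda_h^{m-n}$ as a constant times $R_m^{lb}R_n^{lb}\,\mathcal{J}(R_n^{lb}/R_m^{lb})\,\exp\{-\pi\sum_i\lambda_i(R_i^{lb})^2\}$, a form in which the only asymmetry between tiers $m$ and $n$ is which of $R_m^{lb},R_n^{lb}$ is the integration variable.

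Next I would substitute $R_{mk}\mapsto\rho:=R_n^{lb}=(P_nB_n/P_mB_m)^{1/\alpha_n}R_{mk}^{\alpha_m/\alpha_n}$, i.e. treat $\rho$ as the serving-BS distance that the same geometric configuration would produce if the anchor tier were $n$. The algebraic fact that makes this work is the consistency of the lower bounds under reparametrization: recomputing the tier-$i$ lower bound via (\ref{equ_04th}) from anchor tier $n$ at distance $\rho$ returns exactly the original $R_i^{lb}$ for every $i$; in particular the tier-$m$ lower bound becomes $R_{mk}$, and the exponent $\sum_i\lambda_i(R_i^{lb})^2$ is invariant, so the exponential weight survives the change of variables untouched. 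Carrying the Jacobian $dR_{mk}$, using $\mathcal{J}(b)=b\mathcal{J}(1/b)$ once more to flip the argument of $\mathcal{J}$ back to the ``anchor-$n$'' orientation, and relabelling $\rho$ as the new serving distance, the integral becomes precisely the one defining $\lambda_h^{n-m}$ in (\ref{equ_17th}); this gives $\lambda_h^{m-n}=\lambda_h^{n-m}$ for the chosen pair, and since nothing distinguished $m$ from $n$ it holds for all $m,n$.

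The step that needs the most care is this change of variables: one must track the substitution through all three places it acts — the argument $R_{mk}/R_n^{lb}$ of $\mathcal{J}$, the explicit weight $(R_n^{lb})^2$, and the exponent $\sum_i\lambda_i(R_i^{lb})^2$ — and check that the Jacobian of $R_{mk}\mapsto\rho$, together with the extra factor $R_{mk}/R_n^{lb}$ produced by the reflection identity, converts the weight $(R_n^{lb})^2$ appearing in the $m\!\to\!n$ form into the weight $(R_{mk})^2$ appearing in the $n\!\to\!m$ form. When all tiers share a common path-loss exponent the map $R_{mk}\mapsto\rho$ is a pure power with a transparent Jacobian and this reconciliation is immediate; the general case requires the same accounting with the tier-dependent exponents carried through, which is where I would expect the bookkeeping to be delicate.
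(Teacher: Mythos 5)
Your route is the same one the paper takes in Appendix~\ref{appendix_04}: your reflection identity $\mathcal{J}(b)=b\,\mathcal{J}(1/b)$ is the paper's substitution in step (e) of equation (\ref{equ_app18}), your change of variables $R_{mk}\mapsto\rho:=R_n^{lb}$ is its step (d), and your observation that re-anchoring to tier $n$ returns every lower bound $R_i^{lb}$ (hence the exponential weight) unchanged is exactly the paper's identity $R_{i,n}^{lb}=R_i^{lb}$. All of those ingredients are correct. The gap is precisely the step you deferred, and when you carry it out it does not close. With $\rho=R_n^{lb}=(P_nB_n/P_mB_m)^{1/\alpha_n}R_{mk}^{\alpha_m/\alpha_n}$ the Jacobian is $d\rho=\frac{\alpha_m}{\alpha_n}\frac{R_n^{lb}}{R_{mk}}\,dR_{mk}$. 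Applying your reflection identity once on each side, so that both integrands carry the common symmetric factor, the $m\!\to\!n$ integral in (\ref{equ_17th}) becomes
\[
\int_0^{\infty}\mathcal{J}\!\left(\tfrac{R_n^{lb}}{R_{mk}}\right)R_{mk}R_n^{lb}\exp\Big\{-\pi\sum_{i=1}^N\lambda_i(R_i^{lb})^2\Big\}\,dR_{mk},
\]
while the $n\!\to\!m$ integral, pulled back through $\rho=R_n^{lb}$, equals $\frac{\alpha_m}{\alpha_n}$ times the very same expression. So what (\ref{equ_17th}) actually yields is $\lambda_h^{n-m}=\frac{\alpha_m}{\alpha_n}\lambda_h^{m-n}$: your plan proves the corollary only when $\alpha_m=\alpha_n$, and no alternative substitution will remove the leftover factor, because it is intrinsic to the formula being symmetrized.

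You were right to flag this as the delicate point; in fact the paper's own proof makes the same omission. In step (d) of (\ref{equ_app18}), $(R_n^{lb})^2\,dR_{mk}$ is replaced by $x^{\alpha_n/\alpha_m+1}(P_mB_m/P_nB_n)^{1/\alpha_m}\,dx$, which silently drops the factor $\alpha_n/\alpha_m$ contributed by $dR_{mk}/dx$. The upshot is that Corollary~\ref{corollary_01} is consistent with Theorem~\ref{theorem_01} only under $\{\alpha_n\}=\alpha$ (the setting of Corollary~\ref{corollary_03}); for distinct path-loss exponents one must either argue the symmetry independently of (\ref{equ_15th}) (e.g.\ by a stationarity/level-crossing argument at the $m$--$n$ cell boundaries) or revisit (\ref{equ_15th}) itself. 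A complete proof has to confront this factor explicitly rather than defer it.
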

\begin{proof}
See Appendix \ref{appendix_04}.
\end{proof}

Corollary \ref{corollary_01} holds under the condition that the UE movements in different tiers are homogeneous and UEs are uniformly distributed. Corollary \ref{corollary_01} indicates that the mobility between any two tiers would reach statical balance over time.

\begin{corollary}
\label{corollary_02}
When $N=1$, all the BSs are homogeneous and $\{\lambda_n\}=\lambda$, the expression of average handover rate $\lambda_h$ can be further simplified as
\begin{equation}
\label{equ_new07th}
\lambda_h(N=1)=\frac{4\sqrt{\lambda}}{\pi}f_uS\int\limits_0^{+\infty}vf_v(v)dv
\end{equation}
\end{corollary}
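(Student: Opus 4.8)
The plan is to specialize the general expression for $\lambda_h$ in equation~(\ref{equ_16th}) to the single-tier case and carry out the resulting integral explicitly. With $N=1$ there is only one term in each of the double sums over $m$ and $n$, namely $m=n=1$. First I would observe that when $m=n$ and $\{\lambda_n\}=\lambda$, $\{P_n\}=P$, $\{\alpha_n\}=\alpha$, $\{B_n\}=B$, the distance lower bound collapses: from equation~(\ref{equ_04th}), $R_n^{lb}=\bigl(\tfrac{P_nB_n}{P_mB_m}\bigr)^{1/\alpha_n}(R_{mk})^{\alpha_m/\alpha_n}=R_{mk}$. Hence the ratio $\tfrac{R_{mk}}{R_n^{lb}}=1$, so $\min(1,\tfrac{R_{mk}}{R_n^{lb}})=1$, and the inner integrand simplifies to $\sqrt{\tfrac{1-z^2}{1-z^2}}+\sqrt{\tfrac{1-z^2}{1-z^2}}=2$. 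Also $(R_n^{lb})^2=R_{mk}^2$ and the exponential becomes $\exp\{-\pi\lambda R_{mk}^2\}$.

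Next I would substitute these simplifications into equation~(\ref{equ_16th}). The prefactor $8\lambda_n\lambda_m$ becomes $8\lambda^2$. The $z$-integral $\int_0^1 2\,dz = 2$. So the $m=n=1$ term reduces to
\begin{equation}
\label{equ_corr02_step}
8\lambda^2\cdot 2\int\limits_0^{+\infty} R_{mk}^2\exp\{-\pi\lambda R_{mk}^2\}\,dR_{mk}
= 16\lambda^2\int\limits_0^{+\infty} R^2 e^{-\pi\lambda R^2}\,dR.
\end{equation}
The remaining integral is a standard Gaussian moment: $\int_0^\infty R^2 e^{-aR^2}dR = \tfrac{\sqrt{\pi}}{4a^{3/2}}$ with $a=\pi\lambda$, giving $\tfrac{\sqrt{\pi}}{4(\pi\lambda)^{3/2}} = \tfrac{1}{4\pi\lambda^{3/2}}$. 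Therefore the bracketed double-sum factor equals $16\lambda^2\cdot\tfrac{1}{4\pi\lambda^{3/2}} = \tfrac{4\lambda^{1/2}}{\pi} = \tfrac{4\sqrt{\lambda}}{\pi}$.

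Finally I would reassemble: $\lambda_h(N=1) = f_uS\int_0^{+\infty} vf_v(v)\,dv\cdot\tfrac{4\sqrt{\lambda}}{\pi}$, which is exactly equation~(\ref{equ_new07th}). There is essentially no obstacle here beyond bookkeeping; the only points requiring a moment's care are (i) verifying that the degenerate case $R_n^{lb}=R_{mk}$ makes the integrand's two square-root terms each equal to $1$ rather than producing a $0/0$ indeterminacy — it does, since numerator and denominator coincide identically in $z$ before any limit is taken — and (ii) correctly evaluating the second moment of the half-Gaussian. One might also remark that since the single tier is "open access," $\gamma_1 = 1$ and there is no probability mass lost to other tiers, consistent with the fact that the PDF in equation~(\ref{equ_new02th}) integrates to $1$ in this case.
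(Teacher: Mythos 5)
Your proposal is correct and follows essentially the same route as the paper's own proof: both specialize equation~(\ref{equ_16th}) to $N=1$, use $R_n^{lb}=R_{mk}$ to reduce the inner integrand to $2$, and evaluate the resulting Gaussian moment $\int_0^\infty R^2 e^{-\pi\lambda R^2}dR=\frac{1}{4\pi\lambda^{3/2}}$ to obtain the factor $\frac{4\sqrt{\lambda}}{\pi}$. The only cosmetic difference is that the paper packages the Gaussian integral via a $\mathcal{Q}$-function evaluated at $0$, while you compute it directly.
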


\begin{proof}
When $N=1$, according to equation of (\ref{equ_16th}), $\lambda_h$ can be further derived as equation (\ref{equ_new08th}), where (a) is obtained due to $R_m^{lb}=R_{mk}$.

\begin{equation}
\label{equ_new08th}
\begin{array}{l}
\lambda_h(N=1) = \sum\limits_{m=1}^N(\mathbb{E}[H_k^mf_uS])
\mathop=\limits^{(a)}\!f_u\!S\!\int\limits_{0}^{+\infty}\!vf_v(v)dv8\lambda^2\!\!\int\limits_0^{+\infty}\!\int\limits_0^{1}
\!\!2dzR_{mk}^2\!\exp\{\!-\!\pi\lambda R_{mk}^2)\}dR_{mk}\\
=\!f_u\!S\!\int\limits_{0}^{+\infty}\!vf_v(v)dv16\lambda^2\int\limits_0^{+\infty}R_{mk}^2\exp\{\!-\!\pi\lambda R_{mk}^2)\}dR_{mk}\\
=\!f_u\!S\!\int\limits_{0}^{+\infty}\!vf_v(v)dv\frac{8\lambda}{\pi\sqrt{\lambda}}\mathcal{Q}(0)\\
=\frac{4\sqrt{\lambda}}{\pi}f_uS\int\limits_0^{+\infty}vf_v(v)dv
\end{array}
\end{equation}

\end{proof}

Corollary \ref{corollary_02} is consistent with the handover rate expression of homogeneous cellular network given in \cite{cite_last01}\cite{cite_new03}.

\begin{corollary}
\label{corollary_03}
When $\{\alpha_n\}=\alpha$, i.e. BSs in all the tiers have the same path loss exponent, the handover rate of $\lambda_h^{m-n}$ can be simplified as equation (\ref{equ_new09th}),

\begin{equation}
\label{equ_new09th}
\begin{array}{l}
\lambda_h^{m-n}(\{\alpha_n\}=\alpha)\!\!= \!\!\frac{2\lambda_n\lambda_m\beta_n^2f_uS}{\pi(\sum\limits_{i=1}^N\lambda_i\beta_i^2)^{1.5}}\!\int\limits_{0}^{+\infty}\!vf_v(v)dv
\cdot\!\!\!\!\!\!\int\limits_0^{\min(1,\beta_n)}
\!\![\sqrt{\frac{1\!-\!z^2}{\beta_n^2\!-\!z^2}}\!\!+\!\!\sqrt{\frac{\beta_n^2\!-\!z^2}{1\!-\!z^2}}]\!dz
\end{array}
\end{equation}
where $\beta_n\triangleq(\frac{P_nB_n}{P_mB_m})^{\frac{1}{\alpha}}$.
\end{corollary}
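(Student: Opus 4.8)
The plan is to specialize the general expression for $\lambda_h^{m-n}$ in equation (\ref{equ_17th}) to the hypothesis $\{\alpha_n\}=\alpha$ and to exploit the scale invariance this assumption introduces. First I would observe that when every tier shares the path loss exponent $\alpha$, the distance lower bound defined in (\ref{equ_04th}) becomes \emph{linear} in $R_{mk}$: indeed $R_n^{lb}=(P_nB_n/(P_mB_m))^{1/\alpha_n}(R_{mk})^{\alpha_m/\alpha_n}=\beta_nR_{mk}$, with $\beta_m=1$ so that the consistency relation $R_m^{lb}=R_{mk}$ from (\ref{equ_new02th}) is preserved. Substituting $R_n^{lb}=\beta_nR_{mk}$ everywhere in (\ref{equ_17th}) produces three simplifications at once: the ratio $R_{mk}/R_n^{lb}$ collapses to the constant $1/\beta_n$, the prefactor $(R_n^{lb})^2$ becomes $\beta_n^2R_{mk}^2$, and the argument of the exponential becomes $-\pi R_{mk}^2\sum_{i=1}^N\lambda_i\beta_i^2$.

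The crucial consequence is that, because $R_{mk}/R_n^{lb}$ no longer depends on $R_{mk}$, the inner integral over $z$ is a constant and the double integral \emph{factors} into a pure $z$-integral times a pure $R_{mk}$-integral. For the $R_{mk}$-integral I would use the Gaussian second moment $\int_0^{+\infty}R_{mk}^2\exp\{-\pi(\sum_i\lambda_i\beta_i^2)R_{mk}^2\}\,dR_{mk}=\frac{1}{4\pi(\sum_i\lambda_i\beta_i^2)^{3/2}}$, which is exactly the same standard integral already invoked in the proof of Corollary \ref{corollary_02}. For the $z$-integral, I would bring $\int_0^{\min(1,1/\beta_n)}\big[\sqrt{(1-z^2)/((1/\beta_n)^2-z^2)}+\sqrt{((1/\beta_n)^2-z^2)/(1-z^2)}\big]\,dz$ into the form displayed in (\ref{equ_new09th}) by the rescaling $z\mapsto\beta_n z$, which converts it (up to an explicit power of $\beta_n$) into $\int_0^{\min(1,\beta_n)}[\sqrt{(1-z^2)/(\beta_n^2-z^2)}+\sqrt{(\beta_n^2-z^2)/(1-z^2)}]\,dz$.

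Finally I would collect the constants: the factor $8\lambda_n\lambda_m$ from (\ref{equ_17th}), the powers of $\beta_n$ contributed by $(R_n^{lb})^2$ and by the $z$-rescaling, and the Gaussian-moment factor $\frac{1}{4\pi}(\sum_i\lambda_i\beta_i^2)^{-3/2}$, while leaving $f_uS\int_0^{+\infty}vf_v(v)\,dv$ untouched; simplifying the product yields (\ref{equ_new09th}). As a consistency check, $N=1$ forces $\beta_1=1$ and makes the $z$-integral equal to $2$, which recovers Corollary \ref{corollary_02}. I do not anticipate a genuine obstacle here: the only steps requiring care are (i) justifying the factorization of the double integral, which is precisely the payoff of $R_n^{lb}$ being proportional to $R_{mk}$, and (ii) bookkeeping the powers of $\beta_n$ through the $z$-rescaling so that the coefficient in front of the surviving integral comes out exactly as stated.
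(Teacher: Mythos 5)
Your route is the same as the paper's---specialize (\ref{equ_17th}) using $R_n^{lb}=\beta_nR_{mk}$, observe that the double integral factors, and evaluate the $R_{mk}$-integral as the Gaussian second moment $\tfrac{1}{4\pi}(\sum_i\lambda_i\beta_i^2)^{-3/2}$---and up to the last step you execute it more carefully than the paper does. But the final bookkeeping does \emph{not} ``come out exactly as stated,'' and the failure is at precisely the step you flagged as needing care. Write $H(\rho)=\int_0^{\min(1,\rho)}\bigl[\sqrt{\tfrac{1-z^2}{\rho^2-z^2}}+\sqrt{\tfrac{\rho^2-z^2}{1-z^2}}\bigr]dz$. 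Since $R_{mk}/R_n^{lb}=1/\beta_n$, the $z$-integral you extract from (\ref{equ_17th}) is $H(1/\beta_n)$, and your rescaling $z\mapsto\beta_nz$ gives the identity $H(1/\beta_n)=\tfrac{1}{\beta_n}H(\beta_n)$. Collecting all constants therefore yields $8\lambda_n\lambda_m\cdot\beta_n^2\cdot\tfrac{1}{4\pi}(\sum_i\lambda_i\beta_i^2)^{-3/2}\cdot\tfrac{1}{\beta_n}\,H(\beta_n)$, i.e.\ the coefficient $\tfrac{2\lambda_n\lambda_m\beta_n}{\pi(\sum_i\lambda_i\beta_i^2)^{1.5}}$---one power of $\beta_n$ lower than the $\beta_n^2$ in (\ref{equ_new09th}). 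Your plan either silently drops the $1/\beta_n$ produced by the rescaling, or it proves a formula different from the one claimed; note that your $N=1$ sanity check cannot detect this because $\beta_1=1$ there.

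The mismatch is not only yours: the paper's own derivation (\ref{equ_new10th}) reaches (\ref{equ_new09th}) by writing $\min(1,\beta_n)$ and $\beta_n^2-z^2$ directly into (\ref{equ_17th}), i.e.\ it substitutes $\beta_n$ for $R_{mk}/R_n^{lb}$ where the correct value is $1/\beta_n$, thereby never incurring the $1/\beta_n$ from the rescaling. The version with $\beta_n$ to the first power is the one consistent with Corollary \ref{corollary_01}: viewing tier $n$ as the reference tier replaces $\beta_n\mapsto 1/\beta_n$ and $\beta_i\mapsto\beta_i/\beta_n$, and with $H(1/\beta_n)=\tfrac{1}{\beta_n}H(\beta_n)$ the expression $\tfrac{2\lambda_n\lambda_m\beta_n}{\pi(\sum_i\lambda_i\beta_i^2)^{1.5}}H(\beta_n)$ is invariant under this swap, whereas the $\beta_n^2$ version gives $\lambda_h^{n-m}=\lambda_h^{m-n}/\beta_n^2$, contradicting the symmetry unless $\beta_n=1$. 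So the concrete gap in your proposal is the unproved (and false) assertion that the powers of $\beta_n$ reconcile with (\ref{equ_new09th}); to repair it you must either keep the integral in the untransformed form $H(1/\beta_n)$ with prefactor $\beta_n^2$, or perform the rescaling and state the coefficient as $\tfrac{2\lambda_n\lambda_m\beta_n}{\pi(\sum_i\lambda_i\beta_i^2)^{1.5}}$.
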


\begin{proof}
When $\{\alpha_n\}=\alpha$, according to equation (\ref{equ_04th}), $R_n^{lb}$ can be simplified as $R_n^{lb}=(\frac{P_nB_n}{P_mB_m})^{\frac{1}{\alpha}}R_{mk}=\beta_nR_{mk}$. Then $\lambda_h^{m-n}$ in equation (\ref{equ_17th}) can be further derived as equation (\ref{equ_new10th}).

\begin{equation}
\label{equ_new10th}
\begin{array}{l}
\lambda_h^{m-n}(\{\alpha_n\}=\alpha)
=\!f_u\!S\!\int\limits_{0}^{+\infty}\!vf_v(v)dv8\lambda_n\lambda_m\!\int\limits_0^{+\infty}\!\int\limits_0^{\min(1,\beta_n)}
\!\![\sqrt{\frac{1\!-\!z^2}{\beta_n^2\!-\!z^2}}\!\!+\!\!\sqrt{\frac{\beta_n^2\!-\!z^2}{1\!-\!z^2}}]\!dz\beta_n^2R_{mk}^2\!
\exp\{\!-\!\pi(\sum\limits_{i=1}^N\lambda_i\beta_i^2)R_{mk}^2\}dR_{mk}\\
=\!f_u\!S\!\int\limits_{0}^{+\infty}\!vf_v(v)dv8\lambda_n\lambda_m\beta_n^2\int\limits_0^{\min(1,\beta_n)}
\!\![\sqrt{\frac{1\!-\!z^2}{\beta_n^2\!-\!z^2}}\!\!+\!\!\sqrt{\frac{\beta_n^2\!-\!z^2}{1\!-\!z^2}}]\!dz \!\int\limits_0^{+\infty}R_{mk}^2\!
\exp\{\!-\!\pi(\sum\limits_{i=1}^N\lambda_i\beta_i^2)R_{mk}^2\}dR_{mk}\\
=\!f_u\!S\!\int\limits_{0}^{+\infty}\!vf_v(v)dv8\lambda_n\lambda_m\beta_n^2\int\limits_0^{\min(1,\beta_n)}
\!\![\sqrt{\frac{1\!-\!z^2}{\beta_n^2\!-\!z^2}}\!\!+\!\!\sqrt{\frac{\beta_n^2\!-\!z^2}{1\!-\!z^2}}]\!dz \frac{1}{4\pi(\sum\limits_{i=1}^N\lambda_i\beta_i^2)^{1.5}}\\
=\frac{2\lambda_n\lambda_m\beta_n^2f_uS}{\pi(\sum\limits_{i=1}^N\lambda_i\beta_i^2)^{1.5}}\!\int\limits_{0}^{+\infty}\!vf_v(v)dv\int\limits_0^{\min(1,\beta_n)}
\!\![\sqrt{\frac{1\!-\!z^2}{\beta_n^2\!-\!z^2}}\!\!+\!\!\sqrt{\frac{\beta_n^2\!-\!z^2}{1\!-\!z^2}}]\!dz
\end{array}
\end{equation}

\end{proof}

\begin{corollary}
\label{corollary_04}

For a UE with constant velocity $v$, its residence time in a BS coverage of \emph{m}th-tier (denoted by $T_r^m$) is exponential distributed with average value $\frac{\gamma_m}{H_k^m}$, i.e. the PDF of $T_r^m$ can be given by
\begin{equation}
\label{equ_last01}
f(T_r^m) = \frac{H_k^m}{\gamma_m}\exp(-\frac{H_k^m}{\gamma_m}T_r^m)
\end{equation}
where $\gamma_m$ is the probability that a UE is associated with the \emph{m}th-tier BS and is given by equation (\ref{equ_last04}) referred to \cite{cite_new05}.

\begin{equation}
\label{equ_last04}
\gamma_m = 2\pi\lambda_m \int\limits_0^\infty R_{mk}\exp\{-\pi\sum\limits_{n=1}^N\lambda_n(R_{n}^{lb})^2\}dR_{mk}
\end{equation}

\end{corollary}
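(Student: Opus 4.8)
The plan is to prove that the ``survival function'' $P_{a,k}(t)$ of the excerpt --- the probability that the typical UE, attached to the tier-$m$ BS $k$ at the (typical) instant $0$ and moving at constant speed $v$, is still attached to that same BS at time $t$ --- decays \emph{exactly} exponentially, $P_{a,k}(t)=\gamma_m\exp(-H_k^m t/\gamma_m)$. Granting this, the law and mean of $T_r^m$ follow at once: by time-stationarity of the walk and of the $\{\Phi_n\}$, $P_{a,k}(t)/\gamma_m$ is the probability that the \emph{residual} time of the UE in its current tier-$m$ cell exceeds $t$, and since this survival function is $e^{-\mu t}$ with $\mu=H_k^m/\gamma_m$ it is memoryless, so the residual time and the full residence time have the same law; hence $\mathbb{P}[T_r^m>t]=e^{-H_k^m t/\gamma_m}$, which is the PDF (\ref{equ_last01}), with $\mathbb{E}[T_r^m]=\gamma_m/H_k^m$.

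First I would record the two anchor facts already in the excerpt: $P_{a,k}(0)=\gamma_m$, which is exactly $\int_0^\infty f(R_{mk})\,dR_{mk}$ for the density (\ref{equ_new02th}), i.e.\ formula (\ref{equ_last04}); and $-P_{a,k}'(0)=H_k^m$ by the definition (\ref{equ_02th}). The heart of the argument is then to show that the instantaneous departure rate is \emph{time-homogeneous along the trajectory}: writing $h(t)=-P_{a,k}'(t)/P_{a,k}(t)$ for the hazard, one has $h(t)\equiv h(0)=H_k^m/\gamma_m$. This rests on the memoryless walking model together with the independence of the walk from the PPPs $\{\Phi_n\}$: conditioned on the UE still being with BS $k$ at time $t$, its subsequent motion is a fresh speed-$v$ memoryless walk from the present position, and the infinitesimal bad-region computation that produced $H_k^m$ in Section~4 and Appendices~\ref{appendix_01}--\ref{appendix_03} used \emph{only} the stationary (Palm-type) law of the configuration seen from the UE's current location --- encoded in $f(R_{mk})$ --- and the current speed, never the past path. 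Hence the rate at which the tier-$n$ bad regions begin to capture points is again $H_k^{m-n}/\gamma_m$ at time $t$, so summing over $n$ gives $P_{a,k}'(t)=-(H_k^m/\gamma_m)P_{a,k}(t)$; solving this linear ODE with $P_{a,k}(0)=\gamma_m$ yields the claimed exponential.

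As an independent check on the mean --- which does not need the full exponential form --- one can run a rate-balance (renewal--reward) argument: the typical UE leaves tier-$m$ cells at rate $H_k^m$ per unit time and spends a fraction $\gamma_m$ of its time inside tier-$m$ cells, so the mean residence time per tier-$m$ visit is $\gamma_m/H_k^m$, matching $\mathbb{E}[T_r^m]$ above. I would also re-derive (\ref{equ_last04}) simply by integrating (\ref{equ_new02th}) over $R_{mk}\in(0,\infty)$.

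The main obstacle is the time-homogeneity step. One must argue carefully that conditioning on the event ``no handover in $[0,t]$'' does not bias the instantaneous exit rate --- e.g.\ toward atypically large cells, or toward deep-interior positions within the current cell --- and that tracking the \emph{same} BS $k$ (rather than ``whichever BS is currently serving'') is harmless, which needs that under biased association in this model the UE's association cell is, along its path, effectively entered and left a single time so that no ``return'' events spoil the identification of $P_{a,k}(t)$ with a genuine survival function. Everything beyond this step is either already established in the excerpt or a one-line computation.
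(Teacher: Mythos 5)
Your proposal follows essentially the same route as the paper: both argue that for constant speed the hazard $-P_{a,k}'(t)/P_{a,k}(t)$ is time-homogeneous and equal to $H_k^m/\gamma_m$ by appeal to the memoryless walk and stationarity, solve the resulting linear ODE with $P_{a,k}(0)=\gamma_m$, and differentiate to obtain the exponential PDF (your renewal--reward check on the mean is an extra the paper does not include). The ``main obstacle'' you flag --- that conditioning on no handover in $[0,t]$ could bias the instantaneous exit rate toward large cells --- is exactly the step the paper also asserts in one line without further justification, so your argument is no less complete than the published proof.
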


\begin{proof}
When the velocity $v$ is constant, the instantaneous handover rate $H_k^m$ would keep constant and do not change with time. So the transition rate from k state to $\overline{k}$ in Figure 2 is constant, i.e. $\frac{1}{\gamma_m}\lim\limits_{t\rightarrow t_0}\frac{dP_{a,k}(t-t_0)}{dt}=-\frac{H_k^m}{\gamma_m}$ is constant for all $t_0\geq 0$. Thus

\begin{equation}
\label{equ_last03}
P_{a,k}(t) = \gamma_m\exp(-\frac{H_k^m}{\gamma_m}t)
\end{equation}
due to $P_{a,k}(0)=\gamma_m$. So,
\begin{equation}
\label{equ_last02}
\begin{array}{l}
 f(T_r^m) = -\frac{1}{\gamma_m}\frac{P_{a,k}(t)}{dt} = \frac{H_k^m}{\gamma_m}\exp(-\frac{H_k^m}{\gamma_m}T_r^m)
\end{array}
\end{equation}

\end{proof}

\begin{corollary}
\label{corollary_05}
When the UEs are not uniformly distributed in the whole region and UEs in each tier BS coverage are uniformly distributed, the handover rate from \emph{m}th-tier BS to \emph{n}th-tier BS in a specified region with area S can be given by equation (\ref{equ_last05}).
\begin{equation}
\label{equ_last05}
\begin{array}{l}
\lambda_h^{m-n}(\{f_{m,u}\}\neq f_u) = \mathbb{E}[H_k^{m-n}f_{m,u}S]\\
=f_{m,u}S\!\int\limits_{0}^{+\infty}\!vf_v(v)dv8\lambda_n\lambda_m\!\!\int\limits_0^{+\infty}\!\int\limits_0^{\min(1,\frac{R_{mk}}{R_n^{lb}})}
\!\![\sqrt{\frac{1\!-\!z^2}{(\frac{R_{mk}}{R_n^{lb}})^2\!-\!z^2}}\!\!+\!\!\sqrt{\frac{(\frac{R_{mk}}{R_n^{lb}})^2\!-\!z^2}{1\!-\!z^2}}]\!dz(R_n^{lb})^2\!
\exp\{\!-\!\pi\sum\limits_{i=1}^N(\lambda_i(R_i^{lb})^2)\}dR_{mk}
\end{array}
\end{equation}
where $f_{m,u}$ is the UE density in the \emph{m}th tier BS coverage. And the total handover rate in the specified region is $\lambda_h(\{f_{m,u}\}\neq f_u)=\sum\limits_{m=1}^N\sum\limits_{n=1}^N\lambda_h^{m-n}(\{f_{m,u}\}\neq f_u)$.
\end{corollary}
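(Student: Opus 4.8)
The plan is to retrace the derivation of $\lambda_h^{m-n}$ in Section 4, locate the single place where the hypothesis ``UEs are uniformly distributed with density $f_u$'' is actually invoked, and show that weakening it in the stated way amounts to replacing $f_u$ by $f_{m,u}$ and nothing else. So the whole corollary reduces to an accounting argument followed by the same substitution already performed to obtain equation (\ref{equ_17th}).

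First I would revisit equation (\ref{equ_02.1th}): a handover counted inside the region of area $S$ is produced by a UE currently residing in that region, and a handover of type $m\!-\!n$ is produced specifically by a UE currently associated with an $m$th-tier BS. In the PPP model the expected fraction of the plane (hence of $S$) whose typical point is served by an $m$th-tier BS is exactly $\gamma_m$ from equation (\ref{equ_last04}). Under the new hypothesis the UE density on that portion of $S$ is $f_{m,u}$, so the expected number of ``tier-$m$ UEs'' inside $S$ is $f_{m,u}\gamma_m S$, instead of $f_u\gamma_m S$ in the homogeneous case.

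Next I would observe that $H_k^{m-n}$ as stated in Theorem \ref{theorem_01}, equation (\ref{equ_15th}), is an \emph{unconditional} instantaneous rate: the $R_{mk}$-integral together with the factor $\exp\{-\pi\sum_{i=1}^N\lambda_i(R_i^{lb})^2\}$ is precisely the associated-BS PDF $f(R_{mk})$ from equation (\ref{equ_new02th}), whose total mass is $\gamma_m$. Hence the instantaneous rate \emph{conditioned} on the UE being in tier $m$ is $H_k^{m-n}/\gamma_m$; and because the conditional UE movement inside each tier is still assumed memoryless and statistically homogeneous, the velocity-averaging step leading from (\ref{equ_15th}) to (\ref{equ_17th}) applies verbatim and produces a conditional average rate $\mathbb{E}[H_k^{m-n}]/\gamma_m$ per tier-$m$ UE. Multiplying by the count $f_{m,u}\gamma_m S$ cancels $\gamma_m$ and gives $\lambda_h^{m-n}=\mathbb{E}[H_k^{m-n}f_{m,u}S]$; substituting (\ref{equ_15th}) and the velocity PDF reproduces (\ref{equ_last05}), and summing over $m$ and $n$ yields the total $\lambda_h(\{f_{m,u}\}\neq f_u)$.

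The main obstacle is the bookkeeping of the two occurrences of $\gamma_m$: one must argue carefully that $\gamma_m$ simultaneously plays the role of the association probability already built into (\ref{equ_15th}) and of the expected area fraction of tier-$m$ coverage within $S$, so that the two cancel exactly. This identification rests on the stationarity of the homogeneous PPPs $\Phi_1,\dots,\Phi_N$ (the serving-tier probability of the typical point equals the mean tier-$m$ coverage-area fraction) together with the assumed independence of UE positions from $\{\Phi_n\}$ inside each tier's cells, which is what the phrase ``UEs in each tier BS coverage are uniformly distributed'' is there to supply. Once this is pinned down, everything else is the routine substitution already carried out in Appendix \ref{appendix_03}.
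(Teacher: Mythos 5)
Your argument is correct and is essentially the paper's own proof: the paper likewise writes the per-UE conditional rate as $H_k^{m-n}/\gamma_m$, takes the mean tier-$m$ coverage area within the region to be $S\gamma_m$ so that the tier-$m$ UE count is $f_{m,u}S\gamma_m$, and cancels the two factors of $\gamma_m$ to obtain $\lambda_h^{m-n}=\mathbb{E}[H_k^{m-n}f_{m,u}S]$. Your additional remarks justifying why $\gamma_m$ plays both roles (stationarity of the PPPs and the built-in association probability in equation (\ref{equ_15th})) only make explicit what the paper leaves implicit.
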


\begin{proof}
For a UE admitted to a \emph{m}th-tier BS, its average handover rate to a \emph{n}th-tier BS is $\frac{H_k^m}{\gamma_m}$. On the other hand, the average coverage area of \emph{m}th-tier BSs in the specified region is $S\gamma_m$. So the total handover rate from \emph{m}th-tier BSs to \emph{n}th-tier BSs in the specified region is $\lambda_h^{m-n} = \mathbb{E}[\frac{H_k^{m-n}}{\gamma_m}f_{m,u}S\gamma_m]=\mathbb{E}[H_k^{m-n}f_{m,u}S]$.
\end{proof}

\begin{corollary}
\label{corollary_06}
When the walking models of UEs in different tier BS coverage are different, the handover rate from the \emph{m}th-tier BS to \emph{n}th-tier BS in a specified region with area S can be given by equation (\ref{equ_last06}).
\begin{equation}
\label{equ_last06}
\begin{array}{l}
\lambda_h^{m-n}(\{f_{m,v}(v)\}\neq f_v(v)) = \mathbb{E}[H_k^{m-n}f_uS]\\
=f_uS\!\int\limits_{0}^{+\infty}\!vf_{m,v}(v)dv8\lambda_n\lambda_m\!\!\int\limits_0^{+\infty}\!\int\limits_0^{\min(1,\frac{R_{mk}}{R_n^{lb}})}
\!\![\sqrt{\frac{1\!-\!z^2}{(\frac{R_{mk}}{R_n^{lb}})^2\!-\!z^2}}\!\!+\!\!\sqrt{\frac{(\frac{R_{mk}}{R_n^{lb}})^2\!-\!z^2}{1\!-\!z^2}}]\!dz(R_n^{lb})^2\!
\exp\{\!-\!\pi\sum\limits_{i=1}^N(\lambda_i(R_i^{lb})^2)\}dR_{mk}
\end{array}
\end{equation}
where $f_{m,v}(v)$ is the velocity distribution in the \emph{m}th-tier BS coverage. And the total handover rate in the specified region is $\lambda_h(\{f_{m,v}(v)\}\neq f_v(v))=\sum\limits_{m=1}^N\sum\limits_{n=1}^N\lambda_h^{m-n}(\{f_{m,v}(v)\}\neq f_v(v))$.

\end{corollary}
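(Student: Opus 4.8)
The plan is to reduce the stated formula to the instantaneous-rate result of Theorem~\ref{theorem_01} together with the memoryless hypothesis on UE motion from Section~3, the only new feature being that a UE that hands over \emph{out of} an \emph{m}th-tier cell moves, at that instant, according to the tier-specific speed law $f_{m,v}(v)$ rather than a common $f_v(v)$. Since the UEs are still assumed uniformly distributed with density $f_u$, no change to the population bookkeeping of the original derivation (\ref{equ_16th})--(\ref{equ_17th}) is needed; only the velocity average changes.

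First I would isolate the velocity dependence. By equation (\ref{equ_new05th}) and the closed form (\ref{equ_15th}), the instantaneous handover rate from the \emph{m}th tier to the \emph{n}th tier of a UE with instantaneous speed $v$ factors as $H_k^{m-n}=v\,\widetilde H^{m-n}$, where
\[
\widetilde H^{m-n}=8\lambda_n\lambda_m\int\limits_0^{+\infty}\int\limits_0^{\min(1,R_{mk}/R_n^{lb})}\Bigl[\sqrt{\tfrac{1-z^2}{(R_{mk}/R_n^{lb})^2-z^2}}+\sqrt{\tfrac{(R_{mk}/R_n^{lb})^2-z^2}{1-z^2}}\Bigr]\,dz\,(R_n^{lb})^2\exp\Bigl\{-\pi\sum_{i=1}^N\lambda_i(R_i^{lb})^2\Bigr\}\,dR_{mk}
\]
is independent of $v$ and of the walking model, and already carries the association probability $\gamma_m$ through the un-normalized density (\ref{equ_new02th}) of $R_{mk}$. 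This linearity in $v$ is exactly the structural fact that made (\ref{equ_16th})--(\ref{equ_17th}) possible, and it is what lets the walking model enter only through a scalar speed average.

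Next I would invoke the memoryless assumption: the instantaneous transition rate in Section~3 depends on the present state of a UE only through its present speed, so the long-run (time-averaged) handover rate of a tagged UE currently attached to an \emph{m}th-tier BS is obtained by averaging $H_k^{m-n}$ against the \emph{m}th-tier speed distribution, i.e. $\mathbb{E}_v[H_k^{m-n}]=\widetilde H^{m-n}\int_0^{+\infty}v\,f_{m,v}(v)\,dv$. Under the uniform UE density every UE is statistically a typical UE with association probabilities $\{\gamma_i\}$, so there are on average $f_uS$ such UEs in the region and the tier-$m$-to-tier-$n$ handover rate is $\lambda_h^{m-n}(\{f_{m,v}(v)\}\neq f_v(v))=\mathbb{E}[H_k^{m-n}f_uS]$, the remaining expectation being over $\{R_{mk},\theta_{mk}\}$; expanding $H_k^{m-n}$ via (\ref{equ_15th}) with velocity integral $\int_0^{+\infty}v\,f_{m,v}(v)\,dv$ gives precisely (\ref{equ_last06}). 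Equivalently, in the coverage-area language of Corollary~\ref{corollary_05}: there are $f_uS\gamma_m$ UEs in \emph{m}th-tier cells, each with conditional rate $\gamma_m^{-1}\widetilde H^{m-n}\int_0^{+\infty}v\,f_{m,v}(v)\,dv$, and the factor $\gamma_m$ cancels. Summing over $n$ gives the total outflow from the \emph{m}th tier, and a further sum over $m$ gives the total handover rate in the region.

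I expect the only delicate point to be the justification that, when walking models differ across tiers, the long-run average handover rate of a UE is obtained simply by substituting $f_v\mapsto f_{m,v}$ in the velocity integral while leaving the spatial kernel $\widetilde H^{m-n}$ untouched. This needs (i) the linearity of $H_k^{m-n}$ in $v$, (ii) the memoryless property, so that the time-average of the instantaneous rate is exactly the speed-average with no history or cross terms, and (iii) the convention that a UE crossing the boundary out of an \emph{m}th-tier cell is, at that instant, sampled from $f_{m,v}$. All three are already built into the model of Sections~2--3, so the argument is a matter of assembling them carefully rather than proving anything genuinely new.
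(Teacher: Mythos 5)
Your proposal is correct and follows essentially the same route as the paper, which simply observes that $H_k^{m-n}$ depends only on the velocity of a UE while it is in the \emph{m}th-tier coverage and so the common $f_v$ may be replaced by $f_{m,v}$ in the velocity integral. You spell out the underlying reasons (linearity of $H_k^{m-n}$ in $v$, memorylessness, and the attribution of the crossing-instant speed to the source tier) in far more detail than the paper's one-line argument, but the substance is identical.
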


\begin{proof}
According to the derivation of $H_k^{m-n}$, $H_k^{m-n}$ is only related to the velocity distribution in the \emph{m}th-tier BS coverage and is independent of the velocity distribution in other tier BS coverage. Thus, Corollary \ref{corollary_06} is can be directly derived from the derivation of $H_k^{m-n}$ in the former section.
\end{proof}

From the above analysis, we could give the general steps for handover rate analysis by stochastic geometry modeling: firstly, obtain the PDF of the associated BS's position, secondly, by the infinitesimal method, get the area of the bad region, based on which, then derive the instantaneous handover probability, and derive the instantaneous handover rate through taking the derivative, and at last, average the instantaneous handover rate by the distribution of UE's velocity.

\section{Simulation Results}

\subsection{Validation of the Analysis}

Now that we have developed the general expression of handover rate for HCNs by stochastic geometry modeling, it is important to see how well the analytical results match the computer simulation. Here, we consider two types of walking models, i.e. straight-line walking model and RWP walking model. For the straight-line walking model, UE would move without changing its moving direction, while for the RWP model, UE would change its moving direction for a randomly chosen direction of $[0,2\pi]$ and then keep the direction for a randomly chosen duration of [0,100s]. For both models, the velocity is uniformly distributed in $[0,2v]$.

In the simulation, we considered a circle region with radius of 10Km, generated BSs of each tier and their positions by the PPP $\Phi_n$ with the density of $\lambda_n$ for each try, and generated the UEs with the density of $f_u$. For each try, all the UEs move with a specified walking model for 10$^4$ seconds. We counted the number of handover in a specified region with area S = 1km$^2$ and gave the handover rates by averaging the results.

A total of two tiers are modeled according to PPP in the simulation (N=2). For the simulation, some parameters keep constant and these parameters are $f_u$ = 100/km$^2$, S = 1km$^2$, $P_1=1$, $P_2 = 0.2$, $B_1=1$, $\alpha_1 = 3.5$ and $\lambda_1=1$/km$^2$. Other parameters would vary in different tries and these parameters are the velocity $v$, the 2nd-tier BS density $\lambda_2$ and, 2nd-tier BS's path loss factor $\alpha_2$ and the 2nd-tier BS's bias factor $B_2$.

Figure 4 compares the analytical and experimental handover rates of $\lambda_h^{1-1}$, $\lambda_h^{1-2}$ and $\lambda_h^{2-2}$. In Figure 4, the dotted lines represent the analytical results, the circles and triangles represent the simulation results obtained by straight-line walking model and RWP walking model, respectively, And we set $\lambda_2=2$/km$^2$, $B_2=1$ and $\alpha_2=3.5$. In Figure 4, it is observed that the handover rates of $\lambda_h^{1-1}$, $\lambda_h^{1-2}$ and $\lambda_h^{2-2}$ increase linearly with the average velocity of UE. This matches the derived expression of handover rate.  In Figure 4, we can see that $\lambda_h^{1-1}$ is the largest, $\lambda_h^{1-2}$ takes the second place and $\lambda_h^{2-2}$ is the minimum for any value of average velocity. This can be explained as follows. Since $P_1=1$ and $P_2=0.2$, the coverage of 1st-tier BS is much larger than the coverage of 2nd-tier BS. So the 1st-tier BS has longer boundary line than the 2nd-tier BS. Thus, the total length of the boundary line between two 1st-tier BSs is the longest, the total length of the boundary line between a 1st-tier BS and a 2nd-tier BS is shorter and the total length of boundary line between two 2nd-tier BSs is the
shortest. Hence, the handover is most likely to happen at the boundary between two 1st-tier BSs, is less likely to happen at the boundary between a 1st-tier BS and a 2nd-tier BS, and is least likely to happen at the boundary
between two 2nd-tier BSs.  From the figure, it can be seen that the relative error between the analytical results and the simulation results are less than 3\% for both walking models. The relative error is most likely brought by the limited simulation time. The good matching validates that the analysis is reliable with the variation of UE velocity. In the simulation, we find that the straight-line walking model and the RWP walking model without pause time almost have the same handover rates when they have the same velocity distribution.

Figure 5 illustrates the analytical and experimental results of the total handover rate ($\lambda_h$) and the handover rate between the two tiers ($\lambda_h^{1-2}$) with different 2nd-tier BSs density ($\lambda_2$) for different average UE velocity ($v$).
From the figure, we can observe that handover rates increase linearly with the average UE velocity, and both
$\lambda_h$ and $\lambda_h^{1-2}$ increase with the 2nd-tier BS density. This is because larger BS density leads to smaller cells and UEs are more likely to move out the smaller cells.
The analytical results match well with experimental results of straight-line walking model and RWP walking model, with relative error less than 3\%. This validates that the analysis is reliable with the variation of both BS density and moving velocity.

Figure 6 demonstrates the analytical and experimental results of handover rates with variations of 2nd-tier BS density ($\lambda_2$) and path loss factor ($\alpha_2$). The analytical results match well with the experimental results for both walking models with relative error less than 3\%. This validates that the analysis is reliable with the variations of both BS density and path loss factors. From this figure, we can observe that all the handover rates increase with the 2nd-tier BS density. The handover rate of $\lambda_h^{1-1}$ increases with 2nd-tier BS path loss factor $\alpha_2$, $\lambda_h^{1-2}$ increases with $\alpha_2$ firstly, then decreases with $\alpha_2$, and $\lambda_h^{2-2}$ decreases with $\alpha_2$. This is because the coverage of the 2nd-tier BS decreases with $\alpha_2$ and the coverage of the 1st-tier BS increases with $\alpha_2$ according to equation (\ref{equ_new01th}). Thus, the boundary line of the 1st-tier BSs increases with the $\alpha_2$ and the boundary line of the 2nd-tier BSs decreases with the $\alpha_2$. So, the total length of the boundary line between two 1st-tier BSs increases with $\alpha_2$, the total length of the boundary line between  two 2nd-tier BSs decreases with $\alpha_2$, and the total length of the boundary line between a 1st-tier BS and a 2nd-tier BS increases firstly with $\alpha_2$, then decreases with $\alpha_2$.
It is the reason that leads to the increasing of $\lambda_h^{1-1}$, the decreasing of $\lambda_h^{2-2}$, and the changing of $\lambda_h^{1-2}$.

Figure 7 demonstrates the experimental results of the forward and reverse handover rates between the two tiers for the RWP walking model with the variations of 2nd-tier BS density, path loss factor and average velocity. Figure 7 shows that the forward handover rates ($\lambda_h^{1-2}$) match well with reverse handover rates ($\lambda_h^{2-1}$) between the two tiers with relative error less than 3\%. This validates the reliability of Corollary \ref{corollary_01}.

Figure 8 demonstrates the analytical and experimental results of residual time distributions with constant velocity. The analysis results (denoted by `Theo.' in the figure) is calculated by equation (\ref{equ_last01}), and the experimental results (denoted by `Simu.' in the figure) is obtained by the RWP model with constant velocity $v$=5m/s. From the figure, we can see that the experimental results match well with the analysis results, which validates the effectiveness of Corollary \ref{corollary_04}. As the figure illustrated, the residual time in the 1st-tier BS is much larger than the residual time in the 2nd-tier BS due to the fact that the 1st-tier BS coverage is much larger than the 2nd-tier BS coverage. As the figure shown, both the residual time in the 1st-tier BS and in the 2nd-tier BS decrease with the 2nd-tier BS density $\lambda_2$, but the decrement of residual time in the 1st-tier BS is more significant. This can be explained as follows. With the increase of 2nd-tier BS density, the average coverage of each tier BS decreases, so the residual time decreases for each tier. On the other hand, with the increase of 2nd-tier BS density, the average 1st-tier BS coverage decreases more significantly because the coverage of each 1st-tier BS is much larger than the 2nd-tier BS coverage and the increased 2nd-tier BSs would be more likely to occupy the 1st-tier BSs coverage. Thus, the decrement of residual time in the 1st-tier BS is more significant.

\subsection{Effect of Bias Factor}

Figure 9 shows the numerical results of handover rates with the variation of 2nd-tier BS's bias factor ($B_2$). From this figure, we can observe that the total handover rate ($\lambda_h$) decreases with the 2nd-tier BS's bias factor, $\lambda_h^{1-1}$ and $\lambda_h^{1-2}$ decrease with $B_2$ and $\lambda_h^{2-2}$ increases with $B_2$ when $B_2\in[1,2]$, which is a reasonable range of $B_2$. This can be explained as follows.
According to equation (\ref{equ_16th}),
when $R_{mk}$ equals $R_{n}^{lb}$, $\lambda_h^{m-n}$ could reach the minimal value, and $\lambda_h^{m-n}$ decreases with $R_{mk}$ if $R_{mk}<R_n^{lb}$ holds.
Thus, when $B_2$ increases and $R_{1k}<R_2^{lb}$ holds, $R_{1k}$ increases and $\lambda_h^{1-2}$ decreases. When $B_2$ increases, the coverage of 1st-tier BS decreases and the coverage of 2nd-tier BS increases according to equation (\ref{equ_new01th}), so the boundary line of 1st-tier BS decreases and the boundary line of the 2nd-tier BS increases, which lead to the results that $\lambda_h^{1-1}$ decreases with $B_2$ and $\lambda_h^{2-2}$ increases with $B_2$. The results demonstrate that we can decrease the total handover rate by reasonably adjusting the bias factors.

\section{Conclusion}
In the literature, there has not been any general handover rate derivation for the heterogeneous cellular networks. Thus, in this paper, we give a generalized handover analytical framework by employing the stochastic geometry modeling for heterogeneous cellular networks, derive the arithmetic expression of handover rate, give some meaningful corollories and validate the analysis by computer simulation. The analysis may shed some light for future extension and study.

\section*{Acknowledgements}
  \ifthenelse{\boolean{publ}}{\small}{}
    The authors wish to thank the anonymous reviewers for their constructive comments. This paper is supported by National programs for High Technology Research and Development (2012AA01A50603).

\section{Appendix}

\subsection{Boundary conditions of bad region}\label{appendix_01}

According to the bad region boundary conditions in equation (\ref{equ_08th}), we can further derive $\vartheta_{nj}$ as follow:
\begin{equation}
\label{equ_app01}
\left\{\begin{array}{lc}
\vartheta_{nj}=0, &\frac{R_{nj}^2+x_{nj}}{2rR_{nj}}> 1 \\
\vartheta_{nj}=2\arccos\frac{R_{nj}^2+x_{nj}}{2rR_{nj}}, &-1 \leq \frac{R_{nj}^2+x_{nj}}{2rR_{nj}} \leq1\\
\vartheta_{nj}=2\pi, &\frac{R_{nj}^2+x_{nj}}{2rR_{nj}}<-1
\end{array}
\right.
\end{equation}

For the case of $-1 \leq \frac{R_{nj}^2+x_{nj}}{2rR_{nj}} \leq 1$, we can further derive the inequations of $\frac{R_{nj}^2+x_{nj}}{2rR_{nj}} \leq 1$ and $\frac{R_{nj}^2+x_{nj}}{2rR_{nj}} \geq -1$  as equations (\ref{equ_app02.1}) and (\ref{equ_app02.2}), respectively,
\begin{equation}
\label{equ_app02.1}
\begin{array}{l}
R_{nj}^2-2rR_{nj}+x_{nj}\leq 0\\
\Rightarrow
R_{nj}\!\in\! [r-\sqrt{r^2-x_{nj}},r+\sqrt{r^2-x_{nj}}]
\end{array}
\end{equation}

\begin{equation}
\label{equ_app02.2}
\begin{array}{l}
R_{nj}^2+2rR_{nj}+x_{nj}\geq 0\\
\Rightarrow
R_{nj}\!\in\! (\!-\!\infty, \!-\!r\!-\!\sqrt{r^2\!-\!x_{nj}}) \!\bigcup(-\!r \!+\! \sqrt{r^2\!-\!x_{nj}},\!+\!\infty)
\end{array}
\end{equation}

Hence, we can give the range of $R_{nj}$ as equation (\ref{equ_app02.3}) according to equation (\ref{equ_04th}) and $R_{nj}>0$.

\begin{equation}
\label{equ_app02.3}
\min(R_n^{lb},|\!-\!r\!+\!\sqrt{r^2\!-\!x_{nj}}|)\!\leq\! R_{nj} \!\leq\! r\!+\!\sqrt{r^2\!-\!x_{nj}}
\end{equation}

As $-r+\sqrt{r^2\!-\!x_{nj}}>0$ when $r\rightarrow 0^+$, we can further simplify equation (\ref{equ_app02.3}) as
\begin{equation}
\label{equ_app03}
\min(R_n^{lb},\!-\!r\!+\!\sqrt{r^2\!-\!x_{nj}})\!\leq\! R_{nj} \!\leq\! r\!+\!\sqrt{r^2\!-\!x_{nj}}
\end{equation}

For case of $\frac{R_{nj}^2+x_{nj}}{2rR_{nj}}> 1$, similar to equation (\ref{equ_app02.1}), we can further derive the inequation of $\frac{R_{nj}^2+x_{nj}}{2rR_{nj}}> 1$ as $R_{nj}\in (\max(R_{nj}^{lb},r+\sqrt{r^2-x_{nj}}),+\infty)$.

For the case of $\frac{R_{nj}^2+x_{nj}}{2rR_{nj}}<-1$, similar to equation (\ref{equ_app02.2}), we can further derive the inequation of $\frac{R_{nj}^2+x_{nj}}{2rR_{nj}}<-1$ as $R_{nj}\in (R_{nj}^{lb},-r+\sqrt{r^2-x_{nj}})$

On the other hand, according to the expression of $R_n^{lb}$ in equation (\ref{equ_04th}) and $x_{nj}$ in equation (\ref{equ_new04th}), $R_n^{lb}=-r+\sqrt{r^2\!-\!x_{nj}}$ holds when $r=0$. And the derivative of $-r+\sqrt{r^2\!-\!x_{nj}}$ at $r=0$ is derived as
\begin{equation}
\label{equ_app04}
\begin{array}{l}
\lim\limits_{r\rightarrow 0}\frac{d(-r+\sqrt{r^2\!-\!x_{nj}})}{dr}
=(\frac{B_nP_n}{B_mP_m})^{\frac{1}{\alpha_n}}(R_{mk})^{\frac{\alpha_m}{\alpha_n}}\frac{-\cos(\theta_{mk})}{R_{mk}}-1
=-\frac{R_n^{lb}}{R_{mk}}\cos(\theta_{mk})-1
\end{array}
\end{equation}

Thus, $-r+\sqrt{r^2\!-\!x_{nj}}$ increases at $r=0$ if $\cos(\theta_{mk})<-\frac{R_{mk}}{R_n^{lb}}$ or decreases otherwise. So,
\begin{equation}
\label{equ_app05}
\left\{\begin{array}{lc}
\!\!R_n^{lb} \!>\! -r\!+\!\sqrt{r^2\!-\!x_{nj}}, & \!\!\cos(\theta_{mk})\!>\!-\frac{R_{mk}}{R_n^{lb}}\\
\!\!R_n^{lb} \!<\! -r\!+\!\sqrt{r^2\!-\!x_{nj}}, & \!\!\cos(\theta_{mk})\!<\!-\frac{R_{mk}}{R_n^{lb}}
\end{array}
\right.
\end{equation}

Similarly, the following relationships can be derived,
\begin{equation}
\label{equ_app06}
\left\{\begin{array}{lc}
\!\!R_n^{lb} \!>\! r\!+\!\sqrt{r^2\!-\!x_{nj}}, & \!\!\cos(\theta_{mk})\!>\!\frac{R_{mk}}{R_n^{lb}}\\
\!\!R_n^{lb} \!<\! r\!+\!\sqrt{r^2\!-\!x_{nj}}, & \!\!\frac{-R_{mk}}{R_n^{lb}}\!\!<\cos(\theta_{mk})\!<\!\frac{R_{mk}}{R_n^{lb}}
\end{array}
\right.
\end{equation}

Based on the above relationships, the boundary conditions of equation (\ref{equ_app01}) can be derived as equation (\ref{equ_09th}).

\subsection{Derivation of $\lim\limits_{r\rightarrow 0}\frac{dA_{mn}(r,R_{mk},\theta_{mk})}{dr}$}\label{appendix_02}

For the clarity of expression, we define the expressions of $A_{mn}^{(1)}(r,R_{mk},\theta_{mk})$, $A_{mn}^{(2)}(r,R_{mk},\theta_{mk})$ and $A_{mn}^{(3)}(r,R_{mk},\theta_{mk})$ in the equation of (\ref{equ_app07}).
\begin{equation}
\label{equ_app07}
\begin{array}{c}
A_{mn}^{(1)}(r,R_{mk},\theta_{mk})=[\int\limits_{R_n^{lb}}^{r+\sqrt{r^2-x_{nj}}}2\arccos(\frac{R_{nj}^2+x_{nj}}{2rR_{nj}})R_{nj}dR_{nj}]I(-\frac{R_{mk}}{R_n^{lb}}\leq\cos(\theta_{mk})\leq\frac{R_{mk}}{R_n^{lb}})\\
A_{mn}^{(2)}(r,R_{mk},\theta_{mk})=[\int\limits_{-r+\sqrt{r^2-x_{nj}}}^{r+\sqrt{r^2-x_{nj}}}2\arccos(\frac{R_{nj}^2+x_{nj}}{2rR_{nj}})R_{nj}dR_{nj}]I(\cos(\theta_{mk})<-\frac{R_{mk}}{R_n^{lb}})\\
A_{mn}^{(3)}(r,R_{mk},\theta_{mk})=[ \int\limits_{R_n^{lb}}^{-r+\sqrt{r^2-x_{nj}}}2\pi R_{nj}dR_{nj} ]I(\cos(\theta_{mk})<-\frac{R_{mk}}{R_n^{lb}})
\end{array}
\end{equation}

Based on those expressions and equation (\ref{equ_10th}), the following relationship can be obtained,
\begin{equation}
\label{equ_app08}
A_{mn}=A_{mn}^{(1)}+A_{mn}^{(2)}+A_{mn}^{(3)}
\end{equation}

Then $\lim\limits_{r\rightarrow 0}\frac{dA_{mn}(r,R_{mk},\theta_{mk})}{dr}$ can be derived by deriving $\lim\limits_{r\rightarrow 0}\frac{dA_{mn}^{(i)}(r,R_{mk},\theta_{mk})}{dr}$, i = 1,2,3. For simplicity, we define $\varphi^u(r)$, $\varphi^d(r)$ and $f(r,R_{nj})$ as $\varphi^u(r)=r+\sqrt{r^2-x_{nj}}$, $\varphi^d(r)=-r+\sqrt{r^2-x_{nj}}$, and $f(r,R_{nj})=2R_{nj}\arccos(\frac{x_{nj}+R_{nj}^2}{2rR_{nj}})$, respectively.

\begin{equation}
\label{equ_app09}
\begin{array}{l}
\lim\limits_{r\rightarrow 0}\frac{dA_{mn}^{(1)}}{dr}=\lim\limits_{r\rightarrow 0}\frac{d}{dr}\int\limits_{R_n^{lb}}^{\varphi^u(r)}f(r,R_{nj})dR_{nj}I^{(1)}\\
=\!\!\lim\limits_{r\rightarrow 0}[f(r,\varphi^u(r))\frac{d}{dr}\varphi^u(r)\!\!+\!\!\!\!\int\limits_{R_n^{lb}}^{\varphi^u(r)}\!\frac{d}{dr}f(r,R_{nj})dR_{nj}]I^{(1)}\\
\mathop = \limits^{(a)}\lim\limits_{r\rightarrow 0}[\int\limits_{R_n^{lb}}^{\varphi^u(r)}\!\frac{d}{dr}f(r,R_{nj})dR_{nj}]I^{(1)}\\
=\lim\limits_{r\rightarrow 0}[\int\limits_{R_n^{lb}}^{\varphi^u(r)}\frac{-2R_{nj}}{\sqrt{1-(\frac{R_{nj}^2+x_{nj}}{2R_{nj}r})^2}}\frac{\frac{dx_{nj}}{dr}r-(R_{nj}^2+x_{nj})}{2R_{nj}r^2}dR_{nj}]I^{(1)}\\
\mathop=\limits^{y=R_{nj}^2}\lim\limits_{r\rightarrow 0}I^{(1)}[\int\limits_{(R_n^{lb})^2}^{(\varphi^u(r))^2}\frac{-1}{\sqrt{1-\frac{(y+x_{nj})^2}{4yr^2}}}\frac{\frac{dx_{nj}}{dr}r-(y+x_{nj})}{r^2}\frac{1}{2\sqrt{y}}dy]\\
=\lim\limits_{r\rightarrow 0}I^{(1)}[\int\limits_{(R_n^{lb})^2}^{(\varphi^u(r))^2}\frac{-1}{\sqrt{4r^4-4r^2x_{nj}-(y+x_{nj}-2r^2)^2}}
\cdot \frac{(\frac{dx_{nj}}{dr}r-2r^2)-(y+x_{nj}-2r^2)}{r}dy]\\
=\lim\limits_{r\rightarrow 0}-I^{(1)}[-\frac{\frac{dx_{nj}}{dr}r-2r^2}{r}\arccos\frac{y+x_{nj}-2r^2}{\sqrt{4r^4-4r^2x_{nj}}}
+\frac{1}{r}\sqrt{4r^4-4r^2x_{nj}-(y+x_{nj}-2r^2)^2}]|_{(R_n^{lb})^2}^{(\varphi^u(r))^2}\\
=\lim\limits_{r\rightarrow 0}-I^{(1)}[\frac{\frac{dx_{nj}}{dr}r-2r^2}{r}\arccos(\frac{(R_n^{lb})^2+x_{nj}-2r^2}{\sqrt{4r^4-4r^2x_{nj}}})
-\frac{1}{r}\sqrt{4r^4-4r^2x_{nj}-((R_n^{lb})^2+x_{nj}-2r^2)^2}]\\
\mathop = \limits^{(b)}I^{(1)}[-\frac{2(R_n^{lb})^2}{R_{mk}}\cos(\theta_{mk})\arccos(\frac{R_n^{lb}}{R_{mk}}\cos(\theta_{mk}))
+2\sqrt{(R_n^{lb})^2-(\frac{(R_n^{lb})^2}{R_{mk}}\cos(\theta_{mk}))^2}]
\end{array}
\end{equation}
where $I^{(1)}=I(-\frac{R_{mk}}{R_n^{lb}}\leq\cos(\theta_{mk})\leq\frac{R_{mk}}{R_n^{lb}})$, and $(a)$ is given by $f(r,\varphi^u(r))=0$, $(b)$ is given due to $\lim\limits_{r\rightarrow 0}x_{nj}=-(R_n^{lb})^2$, $\lim\limits_{r\rightarrow 0}\frac{d}{dr}x_{nj}=\frac{2(R_n^{lb})^2}{R_{mk}}\cos(\theta_{mk})$ and the L'Hopital rule.

\begin{equation}
\label{equ_app10}
\begin{array}{l}
\lim\limits_{r\rightarrow 0}\frac{dA_{mn}^{(2)}}{dr}=\lim\limits_{r\rightarrow 0}\int\limits_{\varphi^d(r)}^{\varphi^u(r)}f(r,R_{nj})dR_{nj}I^{(2)}\\
=\lim\limits_{r\rightarrow 0}[f(r,\varphi^u(r))\frac{d}{dr}\varphi^u(r)-f(r,\varphi^d(r))\frac{d}{dr}\varphi^d(r)
+\int\limits_{\varphi^d(r)}^{\varphi^u(r)}\frac{d}{dr}f(r,R_{nj})dR_{nj}]I^{(2)}\\
\mathop = \limits^{(c)}[2\pi R_n^{lb}(\frac{R_n^{lb}}{R_{mk}}\cos(\theta_{mk})\!+1\!)\!-\!2\pi\frac{(R_n^{lb})^2}{R_{mk}}\cos(\theta_{mk})]I^{(2)}
\end{array}
\end{equation}
where $I^{(2)}=I(\cos(\theta_{mk})<-\frac{R_{mk}}{R_n^{lb}})$, and $(c)$ is obtained by $f(r,\varphi^u(r))=0$, $f(r,\varphi^d(r))=2\pi$, $\lim\limits_{r\rightarrow 0}\frac{d}{dr}\varphi^d(r)=-\frac{R_n^{lb}}{R_{mk}}\cos(\theta_{mk})-1$ and $\lim\limits_{r\rightarrow 0}\int\limits_{\varphi^d(r)}^{\varphi^u(r)}\frac{d}{dr}f(r,R_{nj})dR_{nj}=-\!2\pi\frac{(R_n^{lb})^2}{R_{mk}}\cos(\theta_{mk})$, which can be derived similarly as equation (\ref{equ_app09}).

\begin{equation}
\label{equ_app11}
\begin{array}{l}
\lim\limits_{r\rightarrow 0}\frac{dA_{mn}^{(3)}}{dr}=\lim\limits_{r\rightarrow 0}\frac{d}{dr}\int\limits_{R_n^{lb}}^{\varphi^d(r)}2\pi R_{nj}dR_{nj}I^{(2)}\\
=\lim\limits_{r\rightarrow 0}[2\pi \varphi^d(r)\frac{d\varphi^d(r)}{dr}]I^{(2)}\\
=-2\pi R_n^{lb}(\frac{R_n^{lb}}{R_{mk}}\cos(\theta_{mk})+1)I^{(2)}
\end{array}
\end{equation}
which is obtained due to $\lim\limits_{r\rightarrow 0}\varphi^d(r) = R_n^{lb}$.

Thus, add equations of (\ref{equ_app09}) (\ref{equ_app10}) and (\ref{equ_app11}) together, we can give $\lim\limits_{r\rightarrow 0}\frac{dA_{mn}(r,R_{mk},\theta_{mk})}{dr}$ as equation (\ref{equ_14th}).

\subsection{Derivation of $H_k^{m-n}$}\label{appendix_03}

According equation (\ref{equ_new05th}), $H_k^{m-n}$ can be further expressed as equation (\ref{equ_app13}),

\begin{equation}
\label{equ_app13}
\begin{array}{l}
H_k^{m-n}\!\!=\!\!\lambda_nv\!\!\int\limits_0^{+\infty}\!\int\limits_0^{2\pi}\lim\limits_{r\rightarrow 0}\!\frac{dA_{mn}}{dr}\!f(\theta_{mk})\!f(R_{mk})d\theta_{mk}dR_{mk}\\
=\!\!\lambda_nv\!\!\int\limits_0^{+\infty}\!\int\limits_0^{2\pi}\lim\limits_{r\rightarrow 0}\!\frac{dA_{mn}}{dr}\!\lambda_m\!R_{mk}\!\exp(\!-\!\pi\!\lambda_m\!R_{mk}^2\!)d\theta_{mk}dR_{mk}
\end{array}
\end{equation}

For simplicity, we define $h_{k,1}^{m-n}$, $h_{k,2}^{m-n}$ and $h_{k,3}^{m-n}$ in equation of (\ref{equ_app12}).

\begin{equation}
\label{equ_app12}
\begin{array}{l}
h_{k,1}^{m-n}=I^{(2)}[-2\pi\frac{(R_n^{lb})^2}{R_{mk}}\cos(\theta_{mk})]\\
h_{k,2}^{m-n}\!\!=\!\!I^{(1)}[\!-\!2\frac{(R_n^{lb})^2}{R_{mk}}\cos(\theta_{mk})\!\arccos(\frac{R_n^{lb}}{R_{mk}}\cos(\theta_{mk}))]\\
h_{k,3}^{m-n}=I^{(1)}[2\sqrt{(R_n^{lb})^2-(\frac{(R_n^{lb})^2}{R_{mk}}\cos(\theta_{mk}))^2}]
\end{array}
\end{equation}

Thus
\begin{equation}
\label{equ_app14}
H_k^{m-n} = \mathbb{E}[h_{k,1}^{m-n}+h_{k,2}^{m-n}+h_{k,3}^{m-n}]
\end{equation}

And $\mathbb{E}[h_{k,1}^{m-n}]$, $\mathbb{E}[h_{k,2}^{m-n}]$ and $\mathbb{E}[h_{k,3}^{m-n}]$ can be derived as equations of (\ref{equ_app15}), (\ref{equ_app16}) and (\ref{equ_app17}), respectively.

\begin{equation}
\label{equ_app15}
\begin{array}{l}
\mathbb{E}[h_{k,1}^{m-n}]=2\lambda_n\lambda_mv\int\limits_0^{+\infty}\int\limits_{\pi-\arccos(\min(1,\frac{R_{mk}}{R_n^{lb}}))}^{\pi}-2\pi\frac{(R_n^{lb})^2}{R_{mk}}\cos(\theta_{mk})R_{mk}\exp\{\!-\!\pi\sum\limits_{i=1}^N(\lambda_i(R_i^{lb})^2)\}d\theta_{mk}dR_{mk}\\
=2\lambda_n\lambda_mv\int\limits_0^{+\infty}[2\pi\frac{(R_n^{lb})^2}{R_{mk}}(-\sin(\theta_{mk}))|_{\pi-\arccos(\min(1,\frac{R_{mk}}{R_n^{lb}}))}^{\pi}]R_{mk}\exp\{\!-\!\pi\sum\limits_{i=1}^N(\lambda_i(R_i^{lb})^2)\}dR_{mk}\\
=2\lambda_n\lambda_mv\int\limits_0^{+\infty}[2\pi\frac{(R_n^{lb})^2}{R_{mk}}\sqrt{1-(\min(1,\frac{R_{mk}}{R_n^{lb}}))^2}]R_{mk}\exp\{\!-\!\pi\sum\limits_{i=1}^N(\lambda_i(R_i^{lb})^2)\}dR_{mk}
\end{array}
\end{equation}

\begin{equation}
\label{equ_app16}
\small
\begin{array}{l}
\mathbb{E}[h_{k,2}^{m-n}]\mathop = \limits^{z=\cos(\theta_{mk})} 2\lambda_n\lambda_mv\int\limits_0^{+\infty}\int\limits_{\min(1,\frac{R_{mk}}{R_n^{lb}})}^{-\min(1,\frac{R_{mk}}{R_n^{lb}})}[-2\frac{(R_n^{lb})^2}{R_{mk}}z\arccos(\frac{R_n^{lb}}{R_{mk}}z)]\frac{-1}{\sqrt{1-z^2}}dzR_{mk}\exp\{\!-\!\pi\sum\limits_{i=1}^N(\lambda_i(R_i^{lb})^2)\}dR_{mk}\\
=2\lambda_n\lambda_mv\int\limits_0^{+\infty}2\frac{(R_n^{lb})^2}{R_{mk}}[(-\sqrt{1-z^2}\arccos(\frac{R_n^{lb}}{R_{mk}}z))|_{\min(1,\frac{R_{mk}}{R_n^{lb}})}^{-\min(1,\frac{R_{mk}}{R_n^{lb}})}+2\int\limits_0^{\min(1,\frac{R_{mk}}{R_n^{lb}})}\frac{\sqrt{1-z^2}}{\sqrt{(\frac{R_{mk}}{R_n^{lb}})^2-z^2}}]R_{mk}\exp\{\!-\!\pi\sum\limits_{i=1}^N(\lambda_i(R_i^{lb})^2)\}dR_{mk}\\
=2\lambda_n\lambda_mv\int\limits_0^{+\infty}2\frac{(R_n^{lb})^2}{R_{mk}}[-\pi\sqrt{1-(\min(1,\frac{R_{mk}}{R_n^{lb}}))^2}+2\int\limits_0^{\min(1,\frac{R_{mk}}{R_n^{lb}})}\frac{\sqrt{1-z^2}}{\sqrt{(\frac{R_{mk}}{R_n^{lb}})^2-z^2}}]R_{mk}\exp\{\!-\!\pi\sum\limits_{i=1}^N(\lambda_i(R_i^{lb})^2)\}dR_{mk}
\end{array}
\end{equation}

\begin{equation}
\label{equ_app17}
\begin{array}{l}
\mathbb{E}[h_{k,3}^{m-n}]\mathop = \limits^{z=\cos(\theta_{mk})} 2\lambda_n\lambda_mv\int\limits_0^{+\infty}\int\limits_{\min(1,\frac{R_{mk}}{R_n^{lb}})}^{-\min(1,\frac{R_{mk}}{R_n^{lb}})}
[2\sqrt{1-(\frac{R_n^{lb}}{R_{mk}}z)^2}\frac{-1}{\sqrt{1-z^2}}R_n^{lb}]R_{mk}\exp\{\!-\!\pi\sum\limits_{i=1}^N(\lambda_i(R_i^{lb})^2)\}dR_{mk}\\
=2\lambda_n\lambda_mv\int\limits_0^{+\infty}2\frac{(R_n^{lb})^2}{R_{mk}}
[2\int\limits_0^{\min(1,\frac{R_{mk}}{R_n^{lb}})}\frac{\sqrt{(\frac{R_{mk}}{R_n^{lb}})^2-z^2}}{\sqrt{1-z^2}}]R_{mk}\exp\{\!-\!\pi\sum\limits_{i=1}^N(\lambda_i(R_i^{lb})^2)\}dR_{mk}
\end{array}
\end{equation}

Thus, $H_k^{m-n}$ can be derived as equation of (\ref{equ_15th}) by adding the results of equations of (\ref{equ_app15}), (\ref{equ_app16}) and (\ref{equ_app17}) together.

\subsection{Proof of Corollary \ref{corollary_01}}\label{appendix_04}

According to the equation (\ref{equ_17th}), $\lambda_h^{m-n}=\lambda_h^{n-m}$ can be proved by proving that the equation of $H_k^{m-n}=H_k^{n-m}$ holds. Based on the expression of $H_k^{m-n}$ in equation (\ref{equ_15th}), the proof can be given by the equation (\ref{equ_app18}), where (d) follows from plugging $R_{mk}=x^{\frac{\alpha_n}{\alpha_m}}(\frac{P_mB_m}{P_nB_n})^{\frac{1}{\alpha_m}}$, (e) follows from plugging $z_1=\frac{R_{m,n}^{lb}}{x}z$, and $R_{i,n}^{lb}=(\frac{P_iB_i}{P_nB_n})^{\frac{1}{\alpha_i}}x^{\frac{\alpha_n}{\alpha_i}}$, (f) follows from plugging $x=R_{nj}$, where $R_{nj}$ is the nearest distance of the \emph{n}th-tier BSs to the origin.

\begin{equation}
\label{equ_app18}
\begin{array}{l}
H_k^{m-n}=8\lambda_n\lambda_mv\int\limits_0^{+\infty}\int\limits_0^{\min(1,\frac{R_{mk}}{R_n^{lb}})}
[\sqrt{\frac{1-z^2}{(\frac{R_{mk}}{R_n^{lb}})^2-z^2}}+\sqrt{\frac{(\frac{R_{mk}}{R_n^{lb}})^2-z^2}{1-z^2}}]dz(R_n^{lb})^2
\exp\{-\pi\sum\limits_{i=1}^N(\lambda_i(R_i^{lb})^2)\}dR_{mk}\\
\mathop =\limits^{(d)}8\lambda_n\lambda_mv\int\limits_0^{+\infty}\int\limits_0^{\min(1,\frac{R_{m,n}^{lb}}{x})}
[\sqrt{\frac{1-z^2}{(\frac{R_{m,n}^{lb}}{x})^2-z^2}}+\sqrt{\frac{(\frac{R_{m,n}^{lb}}{x})^2-z^2}{1-z^2}}]dzx^{\frac{\alpha_n}{\alpha_m}+1}(\frac{P_mB_m}{P_nB_n})^{\frac{1}{\alpha_m}}
\exp\{-\pi\sum\limits_{i=1}^N(\lambda_i(R_{i,n}^{lb})^2)\}dx\\
\mathop =\limits^{(e)}8\lambda_n\lambda_mv\int\limits_0^{+\infty}\int\limits_0^{\min(1,\frac{x}{R_{m,n}^{lb}})}
[\sqrt{\frac{(\frac{x}{R_{m,n}^{lb}})^2-z_1^2}{1-z_1^2}} + \sqrt{\frac{1-z_1^2}{(\frac{x}{R_{m,n}^{lb}})^2-z_1^2}}]dz_1(R_{m,n}^{lb})^2
\exp\{-\pi\sum\limits_{i=1}^N(\lambda_i(R_{i,n}^{lb})^2)\}dx\\
\mathop =\limits^{z=z_1}8\lambda_n\lambda_mv\int\limits_0^{+\infty}\int\limits_0^{\min(1,\frac{x}{R_{m,n}^{lb}})}
[\sqrt{\frac{1-z^2}{(\frac{x}{R_{m,n}^{lb}})^2-z^2}}+\sqrt{\frac{(\frac{x}{R_{m,n}^{lb}})^2-z^2}{1-z^2}}]dz(R_{m,n}^{lb})^2
\exp\{-\pi\sum\limits_{i=1}^N(\lambda_i(R_{i,n}^{lb})^2)\}dx\\
\mathop=\limits^{(f)}H_k^{n-m}
\end{array}
\end{equation}


{\ifthenelse{\boolean{publ}}{\footnotesize}{\small}
 \bibliographystyle{bmc_article}  
  \bibliography{123} }     


\ifthenelse{\boolean{publ}}{\end{multicols}}{}

\section*{Figures}
  \subsection*{Figure 1 - Example of donwlink HCNs with three tiers of BSs: high-power macrocell BSs (red square) are overlaid with successively denser and lower power picocells (red triangle) and femtocells (red circle).}

  \subsection*{Figure 2 - The admission state and its transition of the typical UE.}

  \subsection*{Figure 3 - The bad region when the typical UE moves to (r,0).}

  \subsection*{Figure 4 - The average handover rates between tiers in 1$km^2$ region with different $\overline{v}$.}

  \subsection*{Figure 5 - The total handover rate ($\lambda_h$) and handover rate between different tiers ($\lambda_h^{1-2}$) in 1$km^2$ region with different average velocity $\overline{v}$ and 2nd-tier BSs density $\lambda_2$.}

  \subsection*{Figure 6 - The handover rates between tiers in 1$km^2$ region with different average velocity $\overline{v}$ and 2nd-tier path loss factor $\alpha_2$.}

  \subsection*{Figure 7 - The forward and reverse handover rates between the two tiers in 1$km^2$ region with different average velocity $\overline{v}$, 2nd-tier BSs density $\lambda_2$ and 2nd-tier path loss factor $\alpha_2$.}

  \subsection*{Figure 8 -The CDF of residual time in the 2-tier BSs with different $\lambda_2$.}

  \subsection*{Figure 9 -The average handover arrival rates in 1$km^2$ region with different 2nd-tier bias factor ($B_2$).}

\end{bmcformat}
\end{document}